\pdfoutput=1
\documentclass[11pt,letterpaper]{article}
\usepackage[margin=1in]{geometry}
\usepackage[T1]{fontenc}
\usepackage[utf8]{inputenc}
\usepackage{newtxtext}
\usepackage[hyphens]{url}
\usepackage{graphicx}
\usepackage{natbib}
\usepackage{caption}
\usepackage[ruled,vlined,linesnumbered,lined,boxed,commentsnumbered]{algorithm2e}
\usepackage{latexsym}
\usepackage{array}
\usepackage{amsmath}
\usepackage{amssymb}
\usepackage{amsfonts}
\usepackage{textcomp}
\usepackage{amsthm}
\usepackage{xcolor}

\theoremstyle{plain}
\newtheorem{theorem}{Theorem}
\newtheorem{lemma}{Lemma}

\theoremstyle{definition}
\newtheorem{definition}{Definition}
\newtheorem{example}{Example}

\newcommand{\nmodels}{\not \models}

\newcommand{\Exp}{\mathtt{Exp}}
\newcommand{\atom}{\mathtt{Atom}}

\newcommand{\formula}{\mathtt{Form}}

\newcommand{\set}[1]{\{ #1 \}}

\newcommand{\tuple}[1]{\langle {#1} \rangle}
\newcommand{\true}{\top}
\newcommand{\false}{\bot}

\newcommand{\intNum}{\mathcal{Z}}

\newcommand{\varSet}{\mathcal{V}}
\newcommand{\varSubSet}{\mathcal{U}}
\newcommand{\statevar}{\mathcal{V}}

\newcommand{\winSet}{\mathcal{W}}
\newcommand{\loseSet}{\mathcal{L}}
\newcommand{\drawSet}{\mathcal{D}}

\newcommand{\act}{\mathcal{A}}
\newcommand{\trans}{\mathcal{T}}
\newcommand{\constraint}{\mathcal{C}}
\newcommand{\estate}{\mathcal{E}}

\newcommand{\order}{\prec}

\newcommand{\suc}{\sigma}

\newcommand{\len}[1][]{|#1|}
\newcommand{\dep}[1][]{d(#1)}
\newcommand{\modulo}{\equiv}
\newcommand{\nmodulo}{\not \equiv}

\newcommand{\play}{\tau}
\newcommand\ie{{\it i.e.}}

\newcommand{\termProb}{\textsc{Term}}
\newcommand{\cyclicProb}{\textsc{Cyclic}}
\newcommand{\winFormProb}{\textsc{WinForm}}
\newcommand{\loseFormProb}{\textsc{LoseForm}}
\newcommand{\drawFormProb}{\textsc{DrawForm}}
\newcommand{\winStateProb}{\textsc{WinState}}
\newcommand{\loseStateProb}{\textsc{LoseState}}
\newcommand{\drawStateProb}{\textsc{DrawState}}

\newcommand{\haltProb}{\textsc{Halt}}
\newcommand{\mortProb}{\textsc{Mort}}

\newcommand{\periodProb}{\textsc{Perd}}

\newcommand{\var}{\mathcal{V}}

\newcommand{\assign}{\leftarrow}

\newcommand{\inc}[1][]{\mathtt{Inc}_{#1}}
\newcommand{\dec}[1][]{\mathtt{Dec}_{#1}}

\newcommand{\halt}{\mathtt{Halt}}

\newenvironment{proofsketch}{\noindent \textit{Proof sketch}:}{\qed}

\title{Decidability and Undecidability Results for \\ LIA-Definable Impartial Combinatorial Games}
\author{
    Shiguang Feng\textsuperscript{1},
    Liangda Fang\textsuperscript{2},
    Jiahao Luo\textsuperscript{2},
    Quanlong Guan\textsuperscript{2}\\[0.5ex]
    \textsuperscript{1}School of Computer Science and Engineering, Sun Yat-sen University, China\\
    \textsuperscript{2}Department of Computer Science, Jinan University, China\\[0.5ex]
    \texttt{fengshg3@mail.sysu.edu.cn},
    \texttt{fangld@jnu.edu.cn}\\
    \texttt{caroluo@stu2024.jnu.edu.cn},
    \texttt{gql@jnu.edu.cn}
}
\date{}

\begin{document}

\maketitle

\begin{abstract}
\looseness=-1
Combinatorial game theory is a branch of mathematics and theoretical computer science that typically studies deterministic games with perfect information and no elements of chance.
The majority of combinatorial games are impartial and formalized in linear integer arithmetic, which we call \textit{LIA-definable ICGs}.
In this paper, we provide some decidability and undecidability results of LIA-definable ICGs. 
Our theoretical results are:		
\begin{enumerate}
\item Deciding if an LIA-definable ICG is terminating (resp. cyclic) is undecidable, and becomes decidable for terminating LIA-definable ICGs.		


\item Deciding if an LIA formula exactly characterizes the set of winning (resp. losing/draw) states of an LIA-definable ICG is undecidable, and becomes decidable for terminating LIA-definable ICGs.

\item Deciding if a state is winning (resp. losing) even for a terminating LIA-definable ICG is undecidable, and becomes decidable for terminating finite-depth LIA-definable ICGs.

\item Deciding if a state is draw for an LIA-definable ICG is undecidable, and becomes decidable for terminating LIA-definable ICGs.
\end{enumerate}
\end{abstract}

\section{Introduction}
\looseness=-1
Combinatorial game theory is a branch of artificial intelligence and theoretical computer science that typically studies deterministic games with perfect information and no elements of chance.
A game that satisfies the following conditions is called a combinatorial game \cite{Fer2018}:
(1) There are two players and (possibly infinitely) many states; 
(2) Two players alternate moving, each transiting from one state to another; 
(3) The game starts from a \textit{legal} state and ends when it moves to a \textit{terminal state} in which the current player cannot make a move.
Under the \textit{normal play rule}, the last player to move \textit{wins}, whereas under the mis\`{e}re play rule the last player to move \textit{loses}.
If the game never ends, it is declared as a \textit{draw}.
A game is \textit{impartial} if both players has the same choice of moving; otherwise, it is called \textit{partizan}.
A game is \textit{terminating} if it always ends in a finite number of moves. 
A game is \textit{cyclic} if it contains a loop, that is, a sequence of states with repetitions. 
A game is \textit{LIA-definable} if it is formalized in linear integer arithmetic. 

\looseness=-1
The majority of combinatorial games are impartial and LIA-definable, for example, Subtraction \cite{Yag2001}, Nim \cite{Bou1901}, Wythoff \cite{Wyt1907}, Welter \cite{Wel1952} and Chomp \cite{Fre1952}.
Over the past decades, extensive efforts have been devoted to analyzing the characteristic of winning states of LIA-definable impartial combinatorial games (ICGs).
\citet{Zei2001} identified the sufficient and necessary condition of winning states in three-rowed Chomp game when the number of piles in the third row is less than $3$.
\citet{DucFNR2010} showed that several extensions and restrictions of Wythoff game preserve exactly the same set of winning states as the original game.
In addition to the above theoretical results, some automatic approaches to generating the winning formula (that is, a formula exactly capturing the set of winning states) for LIA-definable ICGs were developed \cite{WuFXLQCR2020,FangYCHGX2024}.
With only a few exceptions, no research had been carried out on the decidability issue of LIA-definable ICGs.
These exceptions include (1) deciding whether two ICGs have the same set of winning states \cite{LarW2013}; and (2) deciding whether the set of winning states for ICGs is $2$-dimensional semi-periodic \cite{Lar2013}.

\looseness=-1
In this paper, we provide some decidability and undecidability results regarding LIA-definable ICGs.
The decision problems we study in this paper include:

\begin{definition}
	\begin{itemize}
%
%
%
%

		\item The terminating (resp. cyclic) problem ($\termProb$ (resp. $\cyclicProb$)) is deciding if an LIA-definable ICG $\Pi$ is terminating (resp. cyclic)?
		
		\item The winning (resp. losing/draw) formula problem ($\winFormProb$ (resp. $\loseFormProb$/$\drawFormProb$)) is deciding if an LIA-formula $\phi$ is the winning (resp. losing/draw) formula of an LIA-definable ICG $\Pi$?
		
		\item The winning (resp. losing/draw) state problem ($\winStateProb$ (resp. $\loseStateProb$/$\drawStateProb$)) is deciding if a state $s$ is a winning (resp. losing/draw) state of an LIA-definable ICG $\Pi$?
	\end{itemize}	
\end{definition}

%
%
%
%
%
%
%
%

Our theoretical contributions are
\begin{enumerate}
	\item For general LIA-definable ICGs, all of the above decision problems are undecidable. 

	\item For terminating LIA-definable ICGs, all of the above decision problems except the winning and losing state problems become decidable.
	
	\item For terminating LIA-definable ICGs, the winning and losing state problems are still undecidable.

	\item We define a class of terminating LIA-definable ICGs, called finite-depth, and show that the winning and losing state problem are decidable for this class.
\end{enumerate}

\section{Preliminaries}
\looseness=-1

\subsection{Linear Integer Arithmetic} \label{sec:LIA}
\looseness=-1
Let $\intNum$ be the set of integers and $\varSet$ the set of variables.
\textit{Linear integer arithmetic expressions} ($\Exp$), \textit{atoms} ($\atom$) and \textit{formulas} ($\formula$) are defined by the following grammar:
\[e, e' \in \Exp :: c \mid v \mid e + e' \mid e - e' \]
\[
l \in \atom ::  e = e' \mid e \neq e' \mid e \leq e' \mid e \geq e' \mid e \modulo_{c} c' \mid e \nmodulo_{c} c'
\]
\[
\phi, \phi' \in \formula :: \false \mid \true \mid l \mid \neg \phi \mid \phi \land \phi' \mid \phi \lor \phi' \mid \forall v \phi \mid \exists v \phi
\]
where $c, c' \in \intNum$ and $v \in \varSet$.

\looseness=-1
The atom $e \modulo_{c} c'$ means that $e$ and $c'$ are congruence modulo $c$ (that is, $e - c'$ is divisible by $c$), and $e \nmodulo_{c} c'$ denotes the negation of $e \modulo_{c} c'$.

A \textit{state} $s$ is an interpretation that maps every variable $v$ to an integer $v(s)$.
For a state $s$, we can evaluate a expression $e$ into an integer $e(s)$ that the expression simplifies to when replacing each variable $v$ with its corresponding value $v(s)$.
The Boolean value $\phi(s)$ of a formula $\phi$ can be defined in a similar way.
A state $s$ \textit{satisfies} a formula $\phi$, denoted by $s \models \phi$, iff $\phi(s) = \true$.
A formula $\phi$ \textit{implies} another one $\psi$, denoted by $\phi \models \psi$, iff $s \models \psi$ for every state $s$ satisfying $\phi$.
A formula $\phi$ is \textit{valid}, iff $s \models \phi$ for every state $s$. 

\looseness=-1
It is well-known that LIA allows for quantifier elimination, that is, any LIA formula can be transformed into an equivalent quantifier-free formula \cite{Coo1972,Mon2010}.
Let $\varSubSet$ be a subset $\set{u_1, \cdots, u_n}$ of variables.
We use $\forall \varSubSet \phi$ for $\forall u_1 \cdots \forall u_n \phi$ and $\exists \varSubSet \phi$ for $\exists u_1 \cdots \exists u_n \phi$.

\subsection{LIA-Definable Impartial Combinatorial Games}
\label{sec:ICG}
\looseness=-1



\looseness=-1
\begin{definition} \label{def:game} \rm
An LIA-definable ICG is defined as a tuple $\Pi = \tuple{\var, \act, \constraint, \estate}$ where 
\begin{itemize}
\item $\var$: a finite set of variables. 
\item $\act$: a finite set of actions.			
\item $\constraint$: an LIA formula denoting all legal states.
\item $\estate$: an LIA formula denoting all terminal states. 
\end{itemize}
\end{definition}

\looseness=-1
Each \textit{action} transits predecessor states to a possibly infinite set of successor states.
To distinguish predecessor variables from successor one, we use the unprimed variable $v \in \var$ for the former and the primed variable $v' \in \var'$ for the latter.
Given a state $s$ over $\var$, we use $p(s)$ for the state over $\var'$ iff $v'(p(s)) = v(s)$ for every $v \in \var$.
Each action $a$ is defined by a transition formula $\trans(a)$ over predecessor state and successor state.
We require that for every legal state $s$, if there is a successor state $s'$ such that $s \cup p(s') \models \trans(\act)$, then $s \nmodels \estate$.
For example, the transition formula $v \geq 3 \land v' = v - 3$ means that the action is executable when $v \geq 3$ and decreases the value of $v$ by $3$.
\textit{The global transition formula} $\trans(\act)$ of $\Pi$ is defined as $\bigvee_{a \in \act} \trans(a)$.
It represents the whole state transition system of the game $\Pi$.
A state $s'$ is a \textit{successor state} of $s$, denoted by $s \order s'$, iff $s \cup p(s') \models \trans(\act)$.
We use $\sigma(s)$ for the set of successor states of $s$.
We use $\order^+$ for the transitive closure of $\order$.
A state $s'$ is a \textit{descendant state} of $s$, iff $s \order^+ s'$. 
Throughout this paper, we assume that all ICGs are LIA-definable.

A play $\play$ is a sequence of states $(s_1, s_2, \cdots)$ where $s_1$ is a legal initial state and each state $s_{i + 1}$ is a successor state of $s_i$ for $i \geq 1$.
A play $\play$ is \textit{cyclic}, iff $\play = (\cdots, s, \cdots, s, \cdots)$.
If a play $\play$ is finite, then it ends at a terminal state.
We say $\play$ \textit{starts from} $s$, iff the first state of $\play$ is $s$.
An ICG is \textit{terminating}, iff every play is finite.
An ICG is \textit{cyclic}, iff there is a cyclic play.

\looseness=-1
Any state can be classified as three categories: \textit{winning}, \textit{losing} and \textit{draw}. 
\begin{definition} \label{def:wlState} \rm
In an ICG, the sets $\winSet$ of winning states and $\loseSet$ of losing states are recursively defined as
\begin{enumerate}
	\item Every terminal state is a losing state.
	\item Every state such that some of its successor states are losing states is a winning state.
	\item Every state whose successor states are winning states is a losing state.
\end{enumerate}
\end{definition}

\looseness=-1
Definition \ref{def:wlState} characterizes the concepts of $\winSet$ and $\loseSet$ under the normal play rule.
Under the mis\`{e}re rule, the definitions of $\winSet$ and $\loseSet$ are the same as Definition \ref{def:wlState} except that every terminal state is a winning state.
For the sake of clarity, throughout this paper, we assume that ICGs follow the normal play rule.
The set $\drawSet$ of draw states can be defined as $S \setminus (\winSet \cup \loseSet)$ where $S$ is the set of legal states of an ICG.
The three sets $\winSet$, $\loseSet$ and $\drawSet$ are mutually disjoint.

\looseness=-1
To obtain the finite representation of winning (resp. losing/draw) states, we use an arithmetic formula to exactly capture this notion, called \textit{the winning (resp. losing/draw) formula}.
\begin{definition} \label{def:winForm} \rm
Let $\Pi = \tuple{\var, \act, \constraint, \estate}$ be a terminating ICG. 
We say a formula $\phi \in \formula$ is a \textit{winning (resp. losing/draw) formula} of $\Pi$, iff for every state $s$, $s \models \constraint \land \phi$ iff $s$ is a legal winning (resp. losing/draw) state.
\end{definition}

\looseness=-1
The winning formula $\phi$ along with the formula $\constraint$ exactly characterizes the set of winning states.
In general, there are many logical different winning formulas for an ICG.
But they are logically equivalent under the background theory $\constraint$, that is, $\constraint \models \phi \equiv \phi'$.
Hence, we sometimes call a winning formula the winning one.
The concepts of losing and draw formulas are similar.

%
%
%
%
%

\subsection{2-Counter Machine}

\begin{definition}[\cite{Min1961,Lam1961}] \rm \label{def:countMach}
A \textit{2-counter machine} $M$ consists of
\begin{enumerate}
\item a special index register $r_0$ which denotes the label of current instruction to be executed;

\item two count registers $r_1$ and $r_2$, each of which can hold a non-negative integer;

\item a list $L$ of labeled instructions, which comprises two types: $\inc[i, l]$, $\dec[i, l, l']$ and $\halt$ where $1 \leq i \leq 2$, $1 \leq l, l' \leq \len[L]$ and $\len[L]$ denotes the length of $L$.

\end{enumerate}
\end{definition}

\looseness=-1
Intuitively, $\inc[i, l]$ increments the value of the $i$-th count register $r_i$ and moves to the $l$-th instruction. 
The instruction $\dec[i, l, l']$ works as follows: if the value of $r_i$ is non-zero, then it decrements the value of $r_i$, and moves to the $l$-th instruction; otherwise, it directly jumps to the $l'$-th instruction.




\looseness=-1
A \textit{configuration} $C$ of a counter machine $M$ is a mapping that associate every register $r$ to a non-negative integer $C(r)$ and satisfies $1 \leq C(r_0) \leq \len[L]$.
A configuration $C$ is \textit{halting}, iff there is $C(r_0): \halt \in L$.
We use $M(C)$ to denote the single step configuration of $M$ on $C$, obtained from $C$ via executing the current instruction.
We use $M^i(C)$ to denote the $i$-th step configuration of $M$ on $C$, which is defined as $M^0(C) = C$, and $M^i(C) = M(M^{i - 1}(C))$ for $1 \leq i \leq n$.
A 2-counter machine $M$ is \textit{terminating} in a configuration $C$, iff there is a run $C, M^1(C), \cdots, M^m(C)$ s.t. $M^m(C)$ is a halting configuration.
A configuration $C$ is \textit{periodic} on $M$ iff $C = M^m(C)$ for $m > 0$. 
The following decision problems about 2-counter machine are undecidable \cite{Min1967,Blo2001,Blo2002}. 
\begin{definition}
	\begin{itemize}
%
%
%
%
		
		\item The halting problem ($\haltProb$) is deciding if a 2-counter machine $M$ is terminating in a configuration $C$?
		
		
		\item The mortality problem ($\mortProb$) is decide if a 2-counter machine $M$ is terminating on all configurations?

		\item The periodic problem ($\periodProb$) is deciding if a 2-counter machine $M$ has a periodic configuration?
	\end{itemize}
\end{definition}




\section{The Terminating and Cyclic Problems}

To prove the undecidability of the terminating and cyclic problems, we first provide a reduction from 2-counter machines $M$ to ICGs $\Pi(M)$.


By prime factorization theorem \cite{Long1972}, every positive integer $z$ can be represented uniquely as a product of prime numbers, up to the order of the factors, that is, $z = 2^i 3^j 5^k \cdots$.
Given three natural numbers $p$, $m$ and $n$ and a variable $v$ s.t. $p$ is a prime number and $m \leq n$, we define an LIA-formula $\varphi_{p, m}(v)$ as $v \modulo_{p^m} 0 \land v \nmodulo_{p^{m + 1}} 0$ and $\chi_{p, m, n}(v)$ as $v \modulo_{p^m} 0 \land v \nmodulo_{p^{n + 1}} 0$ where $v$ is a variable.
Intuitively, $\varphi_{p, m}(v)$ means that the order of the prime $p$ in the value of $v$ is $m$ while $\chi_{p, m, n}(v)$ means that the order of the prime $p$ is between $m$ and $n$.

\begin{definition} \rm \label{def:firstEncoding}
Let $M$ be a 2-counter machine and $L$ the list of labeled instructions of $M$.
An ICG $\Pi(M) = \tuple{\var, \act, \constraint, \estate}$ is defined as
\begin{itemize}
\item $\statevar: \set{v}$ and $\act: \set{a_l \mid 1 \leq l \leq \len[L] \text{ and } l: \halt \notin L}$;

\item $\constraint: v \geq 0 \land \chi_{2, 1, \len[L]}(v)$ and $\estate: \bigvee_{l: \halt \in L} \varphi_{2, l}(v)$.
\end{itemize}	

\noindent
The transition formula $\trans(a_l)$ for action $a_l$ is defined as
\begin{center}
	$\begin{cases}
		\varphi_{2, l}(v) \land 2^l v' = 2^j 3 v &\text{if } l: \inc[1, j] \in L; \\
		
		\varphi_{2, l}(v) \land 2^l v' = 2^j 5 v &\text{if } l: \inc[2, j] \in L; \\
		
		\begin{aligned}
			&\varphi_{2, l}(v)  \land \\ 
			&\quad [(\neg \varphi_{3, 0}(v) \land 2^l 3 v' = 2^j v) \lor \\ 
			&\quad\; (\varphi_{3, 0}(v) \land 2^l v' = 2^{j'} v)]
		\end{aligned}
		& \text{if } l: \dec[1, j, j'] \in L; \\
		
		\begin{aligned}
			& \varphi_{2, l}(v) \land \\
			&\quad [(\neg \varphi_{5, 0}(v) \land 2^l 5 v' = 2^j v) \lor {}\\
			&\quad\;(\varphi_{5, 0}(v) \land 2^l v' = 2^{j'} v)]
		\end{aligned}
		& \text{if } l: \dec[2, j ,j'] \in L.
	\end{cases}$
\end{center}
\end{definition}


The ICG $\Pi(M)$ uses a single variable $v$ to encode the index register and the two count registers of the 2-counter machine $M$.
A state $s$ denotes a configuration $C$, iff $v(s) = 2^{C(r_0)} 3^{C(r_1)} 5^{C(r_2)} b$, where $b$ is not a multiple of $2$, $3$, or $5$.
The constraint $\constraint$ ensures that the value of index register is in the valid range (\ie, $1 \leq C(r_0) \leq \len[L]$) in every legal state.
The termination condition $\estate$ means that the execution of instructions ends in instruction $\halt$.
The executability condition of action $a_l$ is $\varphi_{2, l}(v)$, indicating that it corresponds to the $l$-th instruction.
When the $l$-th instruction is not $\halt$, action $a_l$ emulates its behavior. 
Suppose that $l: \inc[1, j] \in L$ and $s$ is a state with the value $v(s) = 2^{l} 3^{m} 5^{n} b$.
Action $a_l$ changes the value $v(s)$ of state $s$ and obtain a successor state $s'$ s.t. $v(s') = 2^{j} 3^{m + 1} 5^{n} b$.
Since the three symbols $l$, $j$ and $j'$ are fixed integers in the termination condition $\estate$ and transition formulas $\trans(a)$ are LIA-definable.
So is $\Pi(M)$. 

\begin{theorem} \label{thm:isTerminatingICG}
$\termProb$ is undecidable.
\end{theorem}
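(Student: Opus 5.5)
We reduce the mortality problem $\mortProb$ to $\termProb$ through the construction $\Pi(M)$ of Definition~\ref{def:firstEncoding}. The claim is that a 2-counter machine $M$ halts from every configuration if and only if the ICG $\Pi(M)$ is terminating. The map $M \mapsto \Pi(M)$ is clearly computable, since each formula is written down directly from the instruction list $L$. So undecidability of $\mortProb$ will transfer to $\termProb$.

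The first step is a simulation lemma. Every legal state $s$ has $v(s) = 2^{l}3^{m}5^{n}b$ with $1 \le l \le \len[L]$ and $b$ coprime to $30$. We let $s$ represent the configuration $C_s$ with $C_s(r_0)=l$, $C_s(r_1)=m$ and $C_s(r_2)=n$. We show the following.
\begin{enumerate}
\item[(i)] $s$ is terminal, i.e.\ $s \models \estate$, iff $C_s$ is halting, and in that case no action is executable.
\item[(ii)] If $C_s$ is not halting, then $s$ has exactly one successor $s'$. This $s'$ is legal, carries the same $b$, and satisfies $C_{s'} = M(C_s)$.
\end{enumerate}
Both parts are checked case by case on the instruction at $l$. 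The guard $\varphi_{2,l}(v)$ selects exactly the action $a_l$. In the $\inc$ case, the equation $2^l v' = 2^j 3 v$ forces $v' = 2^{j}3^{m+1}5^n b$. In the $\dec[1,j,j']$ case, $\neg\varphi_{3,0}(v)$ holds iff $m \ge 1$, and then $2^l 3 v' = 2^j v$ gives $v' = 2^j 3^{m-1}5^n b$. Otherwise $m = 0$ and $v' = 2^{j'}3^0 5^n b$. The register-2 cases are identical with $5$ in place of $3$. Legality of $s'$ holds because $j, j' \le \len[L]$ and $v' \ge 0$. We also note that $v=0$ satisfies every congruence, so it fails $\chi_{2,1,\len[L]}$ and is therefore not legal. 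Hence every legal $v$ is positive and the factorization is well defined.

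The second step derives the equivalence. Suppose $M$ is mortal. Any play from a legal $s$ follows the run of $M$ from $C_s$ by (ii), so it reaches a halting configuration after finitely many steps. By (i) that state is terminal, so the play is finite. Conversely, suppose $M$ does not halt from some configuration $C$. Take $s$ with $v(s) = 2^{C(r_0)}3^{C(r_1)}5^{C(r_2)}$, which is legal. By (i) and (ii) we obtain an infinite play, so $\Pi(M)$ is not terminating.

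The step needing the most care is the case analysis in (ii). There we must confirm that the divisibility guards and the linear equations admit exactly one integer solution $v'$, and that this $v'$ preserves $b$ and legality. We also have to watch a subtlety: the paper writes the coefficients $2^l 3$ as constants, so each equation really is linear. Finally, we should check the requirement that terminal states have no successors. This holds because there is no action $a_l$ for a $\halt$ label, and the guards $\varphi_{2,l}$ for distinct $l$ are mutually exclusive.
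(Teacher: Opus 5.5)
Your proposal is correct and takes the same route as the paper: a computable reduction from $\mortProb$ via $\Pi(M)$, using the fact that $M$ halts from every configuration iff $\Pi(M)$ is terminating. Your simulation lemma (i)--(ii) spells out the ``it can be verified'' step of the paper's proof sketch, namely that legal values are positive, successors preserve $b$ and legality, and terminal states have no moves.
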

\begin{proofsketch}
It can be verified that a 2-counter machine $M$ halts on all configurations iff the ICG $\Pi(M)$ from any state eventually ends.
Since the mortality problem of 2-counter machines is undecidable, deciding if an LIA-definable ICG is terminating is undecidable.
\end{proofsketch}

\begin{theorem} \label{thm:isCyclicICG}
$\cyclicProb$ is undecidable.
\end{theorem}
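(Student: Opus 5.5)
We reduce from the periodic problem $\periodProb$ for 2-counter machines, reusing the construction $\Pi(M)$ of Definition~\ref{def:firstEncoding}. The claim to establish is: a 2-counter machine $M$ has a periodic configuration iff the ICG $\Pi(M)$ is cyclic. Since $\periodProb$ is undecidable and $\Pi(M)$ is computable from $M$, this proves that $\cyclicProb$ is undecidable.

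The first step records the faithful simulation properties of $\Pi(M)$. Every legal state $s$ has $v(s) = 2^{l}3^{m}5^{n}b$ with $1 \le l \le \len[L]$, $b>0$ and $\gcd(b,30)=1$, so it encodes a configuration $C_s$ with $C_s(r_0)=l$, $C_s(r_1)=m$, $C_s(r_2)=n$. We then check, by cases on the instruction at label $l$, that $s$ has exactly one successor $s'$ if $l$ is not a $\halt$ label, and none otherwise. The unique successor has $v(s') = 2^{l'}3^{m'}5^{n'}b$ with the same $b$ and $C_{s'} = M(C_s)$. For instance, in the $\dec[1,j,j']$ case with $m>0$, the equation $2^l 3 v' = 2^j v$ forces $v' = 2^{j}3^{m-1}5^n b$, which is an integer. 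Conversely, every configuration $C$ is encoded by some legal state, e.g.\ $v = 2^{C(r_0)}3^{C(r_1)}5^{C(r_2)}$. Hence the map $s \mapsto C_s$ together with $b$ is a bijection between legal states and pairs $(C,b)$, and it commutes with the step function $M$.

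With this in hand both directions are short. ($\Rightarrow$) Suppose $C = M^m(C)$ with $m>0$. Take the state $s$ encoding $C$ with $b=1$. Its play is deterministic and visits the encodings of $C, M(C), \dots, M^m(C)=C$, so the state $s$ repeats and the play is cyclic. ($\Leftarrow$) Suppose some play repeats a state, $s = s_i = s_{i+k}$ with $k>0$. Each step along the play is the unique transition, so $C_{s_{i+t}} = M^t(C_s)$, and therefore $M^k(C_s) = C_s$. Thus $C_s$ is a periodic configuration. A halting configuration cannot lie on the cycle, because the corresponding state would be terminal and the play would stop there.

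The main obstacle is the first step: making sure the LIA transition formulas admit no spurious successors and preserve the residual factor $b$ exactly. This requires the divisibility side conditions. We must verify that $\varphi_{3,0}(v)$ correctly tests whether $r_1=0$, and that the guard $\varphi_{2,l}(v)$, combined with $\constraint$, pins down the current label uniquely. We also must confirm that successors remain legal, i.e.\ that the new 2-adic order $j$ lies in $[1,\len[L]]$. These are routine case checks, but they are where the correspondence could fail, so they carry the weight of the proof.
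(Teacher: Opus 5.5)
Your proposal is correct and is the paper's argument: the same encoding $\Pi(M)$ from Definition~\ref{def:firstEncoding}, and the same reduction from $\periodProb$ via the equivalence ``$M$ has a periodic configuration iff $\Pi(M)$ is cyclic''. The paper only asserts that this equivalence can be verified; you supply the checks that make it go through, namely uniqueness of successors, preservation of the residual factor $b$, legality of successors, and the fact that a halting configuration cannot lie on a cycle.
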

\begin{proofsketch}
It can be verified that a 2-counter machine $M$ has a periodic configuration iff $\Pi(M)$ contains a loop.
Since the periodic problem of 2-counter machine is undecidable, deciding if an ICG is cyclic is also undecidable.
\end{proofsketch}

\looseness=-1
It can be easily verified that the terminating and cyclic problem for terminating ICGs is decidable since they always terminates and contains no cyclic play.
\begin{theorem} \label{thm:isTerminatingCyclicICG}
	$\termProb$ and $\cyclicProb$ for terminating ICGs is decidable.
\end{theorem}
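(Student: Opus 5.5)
The statement is a promise problem: the input is restricted to ICGs already known to be terminating, so the answers to both questions are fixed in advance. The plan is to exhibit constant deciders for $\termProb$ and $\cyclicProb$ and check that each is correct on every admissible input. No reduction or analysis of LIA formulas is needed.

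For $\termProb$, the decider ignores $\Pi = \tuple{\var, \act, \constraint, \estate}$ and answers \emph{yes}. It is correct because, by hypothesis, every play of $\Pi$ is finite, which is exactly the definition of a terminating ICG.

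For $\cyclicProb$, the decider always answers \emph{no}. Correctness requires showing that a terminating ICG has no cyclic play. I would argue by contradiction. Suppose $\play = (s_1, \dots, s_i, \dots, s_k, \dots)$ is a play with $s_i = s_k$ for some $i < k$. Then the infinite sequence
\[
(s_1, \dots, s_{i-1}, s_i, \dots, s_{k-1}, s_i, \dots, s_{k-1}, \dots)
\]
starts from the legal initial state $s_1$. Each consecutive pair is a successor step: within the block this holds because $\play$ is a play, and at each wrap-around because $s_{k-1} \order s_k = s_i$. So this sequence is an infinite play, which contradicts termination.

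The only point needing care is the definitional reading of ``play''. The play definition requires only that $s_1$ be legal, so the unrolled sequence is a play as long as it keeps the same legal initial state $s_1$, which it does. If instead legality of every state were required, all states of the unrolled sequence already occur in $\play$, so the argument is unaffected. I do not expect any real obstacle: the theorem is essentially the observation that both properties are fixed once the input is promised to be terminating.
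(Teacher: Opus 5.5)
Your proposal is correct and takes essentially the paper's approach. The paper simply observes that terminating ICGs always terminate and contain no cyclic play, so both answers are fixed on admissible inputs. Your unrolling argument, which turns a play with $s_i = s_k$ into an infinite play from the same legal initial state $s_1$, supplies the one step the paper leaves implicit.
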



\section{The Winning Formula Problem}
In this section, we show that the winning formula problem are undecidable.
It becomes decidable for terminating ICGs.

To this end, we first provide another reduction from 2-counter machines to ICGs. 

\begin{definition} \rm \label{def:secondEncoding}
Let $M$ be a 2-counter machine and $L$ the list of labeled instructions of $M$.
An ICG $\Pi'(M) = \tuple{\var, \act, \constraint, \estate}$ is defined as
\begin{itemize}
\item $\statevar: \set{v}$;
\item $\act: \set{a_l \mid 1 \leq l \leq \len[L]  \text{ and } l: \halt \notin L} \cup \set{a_{\len[L] + 1}}$; 

\item $\constraint: v \geq 0 \land \chi_{2, 1, \len[L]}(v) \land \chi_{7, 0, 1}(v)$;

\item $\estate: \left(\bigvee_{l: \halt \in L} \varphi_{2, l}(v) \right) \land \varphi_{7, 1}(v)$.
\end{itemize}	

\noindent
The transition formula $\trans(a_l)$ for action $a_l$ is defined as
\begin{center}
	$\begin{cases}
		\varphi_{2, l}(v) \land \varphi_{7, 1}(v) \land (2^l 7 v' = 2^j 3 v) & \text{if } l: \inc[1, j] \in L; \\
		
		\varphi_{2, l}(v) \land \varphi_{7, 1}(v) \land (2^l 7 v' = 2^j 5 v) & \text{if } l: \inc[2, j] \in L; \\
		
		\begin{aligned}
			&\varphi_{2, l}(v) \land \varphi_{7, 1}(v) \land \\
			&\quad [(\neg \varphi_{3, 0}(v) \land 2^l 3 \cdot 7 v' = 2^j v) \lor \\
			&\quad \;(\varphi_{3, 0}(v) \land 2^l 7 v' = 2^{j'} v)]
		\end{aligned}
		& \text{if } l: \dec[1, j, j'] \in L; \\
		
		\begin{aligned}
			&\varphi_{2, l}(v) \land \varphi_{7, 1}(v) \land \\
			&\quad [(\neg \varphi_{5, 0}(v) \land 2^l 5 \cdot 7 v' = 2^j v) \lor \\
			&\quad\;(\varphi_{5, 0}(v) \land 2^l 7 v' = 2^{j'} v)]\\
		\end{aligned}
		& \text{if } l: \dec[2, j, j'] \in L; \\
		
		\varphi_{2, l}(v) \land \varphi_{7, 0}(v) \land v' = 7 v & \text{if } l = \len[L] + 1.
	\end{cases}$
\end{center}
\end{definition}

Definition \ref{def:secondEncoding} is a slight modification of Definition \ref{def:firstEncoding}.
In Definition \ref{def:secondEncoding}, a state $s$ denotes a configuration $C$, iff $v(s) = 2^{C(r_0)} 3^{C(r_1)} 5^{C(r_2)} 7^i b$, where $0 \leq i \leq 1$ and $b$ is not a multiple of $2$, $3$, $5$ or $7$.
The constraint $\constraint$ requires the value $v(s)$ of every legal state $s$ to be of the form $2^{l} 3^{m} 5^{n} 7^i b$ where $1 \leq l \leq \len[L]$.
The termination condition $\estate$ stipulates that the value $v(s)$ of every terminal state $s$ is $2^{l} 3^{m} 5^{n} 7^1 b$ where $l: \halt \in L$. 
In addition, each action $a_l$ in the ICG $\Pi'(M)$ not only emulates the $l$-th instruction of $L$ but also clear the order of $7$ to $0$, which transits an integer $2^{C(r_0)} 3^{C(r_1)} 5^{C(r_2)} 7^1 b$ to another one $2^{M(C)(r_0)} 3^{M(C)(r_1)} 5^{M(C)(r_2)} 7^0 b$.
The last action $a_{\len[L] + 1}$ only increments the order of $7$ and keep the configuration implicitly expressed in $v(s)$ unchanged, which transits an integer $2^{C(r_0)} 3^{C(r_1)} 5^{C(r_2)} 7^0 b$ to $2^{C(r_0)} 3^{C(r_1)} 5^{C(r_2)} 7^1 b$.

\begin{lemma} \label{lem:isWinForm}
	Let $b$ be a positive integer s.t. $b$ is not a multiple of $2$, $3$, $5$ or $7$.
	
	\begin{enumerate}		
		\item If $s$ is a state such that $v(s) = 2^{l} 3^{m} 5^{n} 7^0 b$ and $l: \halt \notin L$, then there is a unique successor state $s'$ of $s$ s.t. $v(s) = 2^{l} 3^{m} 5^{n} 7^1 b$.
		
		\item If $s$ is a state such that $v(s) = 2^{C(r_0)} 3^{C(r_1)} 5^{C(r_2)} 7^1 b$ and $C(r_0): \halt \notin L$, then there is a unique successor state $s'$ of $s$ s.t. $v(s') = 2^{M(C)(r_0)} 3^{M(C)(r_1)} 5^{M(C)(r_2)} 7^0 b$.
		
		\item For any legal state $s$, there is a unique play from $s$.
		
		\item If $s$ is a state of $\Pi'(M)$ s.t. $v(s) = 2^{l} 3^{m} 5^{n} 7^0 b$ (resp. $2^l 3^{m} 5^{n} 7^{1} b$), then $s$ is a winning (resp. losing) state iff there is a finite play from $s$.
		
		\item If $s$ is a state of $\Pi'(M)$, then $s$ is a draw state iff there is an infinite play from $s$.
	\end{enumerate}
\end{lemma}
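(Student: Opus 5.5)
We prove the five items in order; each one uses the previous ones. Items 1 and 2 come from unfolding the transition formulas of Definition~\ref{def:secondEncoding}. Item 3 follows from items 1 and 2. Items 4 and 5 follow by induction on the length of the unique play.

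For item 1, let $v(s) = 2^l 3^m 5^n 7^0 b$. Then $\varphi_{7,1}(v)$ fails. So every action $a_{l'}$ with $l' \le \len[L]$ is disabled, since each of them contains the conjunct $\varphi_{7,1}(v)$. The only enabled action is $a_{\len[L]+1}$. Its conjunct $\varphi_{2,l}(v)$ holds, and its equation $v' = 7v$ fixes $v'$ uniquely. The statement's $v(s)$ in the conclusion should read $v(s')$, and we prove that version. We also check that $s'$ is legal: the exponent of $7$ becomes $1$ and the exponent of $2$ is unchanged, so $\constraint$ holds.

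For item 2, let $v(s) = 2^{C(r_0)}3^{C(r_1)}5^{C(r_2)}7^1 b$ with $l = C(r_0)$. Now $\varphi_{7,0}$ fails, so $a_{\len[L]+1}$ is disabled. Because $\varphi_{2,l}(v)$ pins down the $2$-adic order, among $a_1,\dots,a_{\len[L]}$ only $a_l$ is enabled. We then go through the four instruction cases.
\begin{itemize}
\item For $\inc[1,j]$, the equation $2^l 7 v' = 2^j 3 v$ has the unique solution $v' = 2^j 3^{m+1} 5^n b$.
\item For $\dec[1,j,j']$, the two disjuncts are mutually exclusive via $\varphi_{3,0}(v)$. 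In the branch $m>0$ we get $v' = 2^j 3^{m-1}5^n b$. In the branch $m=0$ we get $v' = 2^{j'} 3^0 5^n b$.
\item The two cases for register $2$ are symmetric.
\end{itemize}
In every case, integrality and uniqueness of $v'$ hold because the division is exact: $v$ is divisible by the relevant $2^l\cdot 7$ (and by $3$ or $5$ where needed). The result is exactly the encoding of $M(C)$ with $7^0$.

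For item 3, every legal state has exactly one of the two forms. If $l$ is a $\halt$ label and the $7$-exponent is $1$, the state is terminal and has no successor at all, since every action requires a non-halt label except $a_{\len[L]+1}$, which needs $7^0$. If $l$ is a $\halt$ label and the $7$-exponent is $0$, item 1 applies; we read its hypothesis $l:\halt\notin L$ as inessential here. Otherwise items 1 and 2 apply. So every non-terminal legal state has exactly one successor, and the successor is again legal, which gives a unique play.

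Items 4 and 5 are where a little care is needed. For item 4 we induct on the length $k$ of the finite play from $s$. Every finite play ends in a terminal state. Terminal states have $7$-exponent $1$, and the $7$-exponent alternates along the play. Hence a state with $7^1$ is at even distance from the end, and a state with $7^0$ is at odd distance. The base case is that terminal states are losing by Definition~\ref{def:wlState}. In the inductive step, since $s$ has a single successor $s'$, $s$ is winning iff $s'$ is losing, and $s$ is losing iff $s'$ is winning; the induction hypothesis finishes it.

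For the converse direction of item 4, and for item 5, we need that a state on an infinite play is in neither $\winSet$ nor $\loseSet$. This is the main obstacle, because the definition of $\winSet$ and $\loseSet$ is inductive, i.e. a least fixed point. We handle it by induction on the stage at which a state enters $\winSet \cup \loseSet$: any such state has a finite play, because its unique successor entered at an earlier stage, and terminal states are at stage $0$. So a state whose unique play is infinite is a draw. Combined with item 4, every state with a finite play is winning or losing, and item 5 follows.
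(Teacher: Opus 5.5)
Your proof is correct. Items 1--3 follow the paper's route: unfold the transition formulas, note that exactly one action is enabled, and observe that the equation fixes $v'$. You write out the case analysis that the paper leaves implicit. One small remark on item 1: read literally, $\trans(a_{\len[L]+1})$ contains the conjunct $\varphi_{2,\len[L]+1}(v)$, which no legal state satisfies. Your ``its conjunct $\varphi_{2,l}(v)$ holds'' therefore adopts the intended reading, exactly as the paper does.

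The genuine difference is in items 4 and 5. The paper proves item 4 by a single induction over the well-founded set $(\winSet \cup \loseSet, \order^+)$ from its appendix, handling both directions in the same inductive step. That is sound for ``winning/losing $\Rightarrow$ finite play''. For ``finite play $\Rightarrow$ winning/losing'', however, it inducts over a set that $s$ is not yet known to belong to; the written step also swaps winning and losing in the $7^0$ case.

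You split the two directions: induction on the length of the unique play for one, and induction on the fixed-point stage for the other. This removes the circularity. Your stage induction is in effect a self-contained proof of the paper's well-foundedness lemma, specialised to this single-successor game.

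Your split also makes item 5 airtight. The paper derives ``infinite play $\Rightarrow$ draw'' from item 4 alone. That rules out a $7^0$ state on an infinite play being winning, but not its being losing, and symmetrically for $7^1$. Your fact that every state of $\winSet \cup \loseSet$ has a finite play excludes both.

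The paper's version is shorter because it leans on a general, game-independent lemma. Yours costs two elementary inductions but needs only that every non-terminal legal state has exactly one, legal, successor.
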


\begin{theorem} \label{thm:isWinForm}
$\winFormProb$ is undecidable.
\end{theorem}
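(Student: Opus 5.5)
We reduce from the mortality problem $\mortProb$ via the construction $\Pi'(M)$ of Definition~\ref{def:secondEncoding}, using the formula
\[
\phi \;:=\; \varphi_{7,0}(v),
\]
which states that the order of $7$ in $v$ is $0$. The claim is that $M$ terminates on all configurations iff $\phi$ is the winning formula of $\Pi'(M)$. Since $\mortProb$ is undecidable and the map $M \mapsto (\Pi'(M), \phi)$ is computable, $\winFormProb$ is undecidable.

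Every legal state $s$ of $\Pi'(M)$ has $v(s) = 2^{l}3^{m}5^{n}7^{i}b$ with $1 \le l \le \len[L]$, $i \in \set{0,1}$, and $b$ coprime to $2,3,5,7$. (The value $v=0$ is excluded, since $\chi_{2,1,\len[L]}$ requires $v \nmodulo_{2^{\len[L]+1}} 0$.) Such a state corresponds to the configuration $C$ with $C(r_0)=l$, $C(r_1)=m$, $C(r_2)=n$. By Lemma~\ref{lem:isWinForm}(3) the play from $s$ is unique, and by parts (1) and (2) it alternates between a ``$7^0$'' state and a ``$7^1$'' state. Each pair of moves simulates exactly one step of $M$, until a state $2^{l}3^m5^n7^1b$ with $l:\halt \in L$ is reached. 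Hence the play from $s$ is finite iff $M$ halts from $C$.

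One caveat concerns a state with $i=0$ and $l:\halt\in L$. Here we check that action $a_{\len[L]+1}$ applies (its precondition presumably refers to the current label), so that $s$ moves to the terminal $7^1$ state and is winning. If the definition is read so that $a_{\len[L]+1}$ is not enabled there, we would instead restrict $\phi$ slightly, for example to $\varphi_{7,0}(v) \land \bigvee_{l:\halt\notin L}\varphi_{2,l}(v)$, or treat such states separately. Verifying this boundary case is the main fiddly step.

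For the forward direction, suppose $M$ is mortal. Then every play is finite, so by Lemma~\ref{lem:isWinForm}(4) the states with $7^0$ are winning and those with $7^1$ are losing. Thus the legal winning states are exactly the legal states satisfying $\varphi_{7,0}(v)$, and $\phi$ is the winning formula.

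For the converse, suppose some configuration $C$ does not halt. Take $s$ with $v(s) = 2^{C(r_0)}3^{C(r_1)}5^{C(r_2)}7^0$, so $b=1$. This state is legal and satisfies $\phi$. However, its unique play is infinite, so by Lemma~\ref{lem:isWinForm}(5) it is a draw state and not winning. Hence $\phi$ is not the winning formula.

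A subtle point is that Definition~\ref{def:winForm} is stated for terminating ICGs. We interpret $\winFormProb$ using the same iff-condition for arbitrary ICGs, as the problem statement intends. The main obstacle is therefore ensuring that the undecidability of $\mortProb$ applies to all configurations, including arbitrary counter values. This matches our encoding, where every legal state encodes some configuration, so the converse direction goes through.
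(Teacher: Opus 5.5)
Your proposal is correct and follows the paper's proof: the paper makes the same reduction from $\mortProb$ via $\Pi'(M)$ and uses the same formula, since its $v \geq 0 \land \chi_{2,1,\len[L]}(v) \land \varphi_{7,0}(v)$ differs from your $\varphi_{7,0}(v)$ only by conjuncts already implied by $\constraint$, and it argues the same two directions from Lemma~\ref{lem:isWinForm}. On your boundary worry: the paper's proof implicitly treats $a_{\len[L]+1}$ as enabled on every legal $7^0$ state, halting labels included (read literally, its guard $\varphi_{2,\len[L]+1}(v)$ contradicts $\constraint$, which is evidently a slip), so your unrestricted $\phi$ is exactly the one used and the fallback restriction is not needed.
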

\begin{proof}
Let $\phi$ be the formula $v \geq 0 \land \chi_{2, 1, \len[L]}(v) \land \varphi_{7, 0}(v)$. 
The formula $\phi$ exactly characterizes the set of positive integers $2^l 3^m 5^n 7^0 b$ where $1 \leq l \leq \len[L]$ and $b$ is not a multiple of $2$, $3$, $5$ or $7$.
We hereafter prove that $\phi$ is the winning formula for $\Pi'(M)$ iff $M$ is terminating on all configurations.

($\Rightarrow$): Suppose that $\phi$ is the winning formula.
Let $C$ be an configuration of $M$ and $s$ a state s.t. $v(s) = 2^{C(r_0)} 3^{C(r_1)} 5^{C(r_2)} 7^0 b$.
Since $s \models \phi$, $s$ is a winning state.
By Items 3 and 4 of Lemma \ref{lem:isWinForm}, there is a unique finite play $\play$ from $s$ to a terminal state.
By Items 1 and 2 of Lemma \ref{lem:isWinForm} $\play$ alternates between winning states and losing states, and emulates the run of $M$ on $C$.
Hence, $M$ is terminating on all configurations.

($\Leftarrow$): Suppose that $M$ is terminating on all configurations.
Let $s$ be a state s.t. $s \models \constraint \land \phi$.
It follows that $s$ corresponds to a configuration $C$ where $v(s) = 2^{C(r_0)} 3^{C(r_1)} 5^{C(r_2)} 7^0 b$.
By Item 3 of Lemma \ref{lem:isWinForm}, there is a unique play $\play$ from $s$.
By Items 1 and 2 of Lemma \ref{lem:isWinForm}, $\play$ alternates between winning states and losing states, and emulates the run of $M$ on $C$.
Hence, $\play$ eventually ends at a terminal state.
By Item 4 of Lemma \ref{lem:isWinForm}, $s$ is a winning state.
Similarly, if $s$ is a state s.t. $s \models \constraint \land \neg \phi$, then $s$ is a losing state.
Hence, the formula $\phi$ exactly characterizes the set of winning states.

Since the mortality problem of 2-counter machines is undecidable, deciding if an LIA-formula is a winning formula for an LIA-definable ICG is undecidable.
\end{proof}

\looseness=-1
Since no draw state exists in terminating ICG, every legal state has only two categories: winning and losing \cite{Fer2018}.
In addition, the sets of winning states and of losing states are disjoint.
If a formula $\phi$ is the winning formula, then its negation $\neg \phi$ must be the losing formula.
We adopt the three constraints for verifying if a formula $\phi$ is the winning formula proposed in \cite{WuFXLQCR2020}.
Each of the above constraints corresponds to each condition of winning and losing states (cf. Definition \ref{def:wlState}) of terminating ICGs.

\begin{definition}\label{def:consWinForm} \rm 
	Let $\Pi = \tuple{\var, \act, \constraint, \estate}$ be a terminating ICG.
	The constraints for the winning formula $\phi$ of $\Pi$ are
	\begin{enumerate}
		\item $\constraint \land \estate \rightarrow \neg \phi$;
		\item $(\constraint \land \phi) \rightarrow \exists \statevar' \{\trans(\act) \land (\constraint \land \neg \phi)[\statevar/\statevar']\}$;
		\item $(\constraint \land \neg \phi) \rightarrow \forall \statevar' \{\trans(\act) \rightarrow (\constraint \land \phi)[\statevar/\statevar']\}$.
	\end{enumerate}
\end{definition}

\looseness=-1
The following lemma confirms the correctness of the constraints for the winning formula.
\begin{lemma} \label{lem:consWinForm} 
	Let $\Pi = \tuple{\var, \act, \constraint, \estate}$ be a terminating ICG.
	A formula $\phi$ is the winning formula of $\Pi$ iff all of the constraints illustrated in Definition \ref{def:consWinForm} are valid.
\end{lemma}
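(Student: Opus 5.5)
We prove the two directions separately. Soundness uses only Definition \ref{def:wlState} and the partition of legal states. Completeness uses well-founded induction, which is available because $\Pi$ is terminating.

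For the ($\Rightarrow$) direction, assume $\phi$ is the winning formula. Since $\Pi$ is terminating, every legal state is either winning or losing, and these sets are disjoint. Hence for every legal $s$, $s \models \neg\phi$ iff $s$ is losing. We check the three constraints in turn.
\begin{enumerate}
\item A legal terminal state is losing by Definition \ref{def:wlState}, so it satisfies $\neg\phi$; this gives constraint 1.
\item A legal $s \models \phi$ is winning, so it has a losing successor $s'$. This $s'$ satisfies $\constraint \land \neg\phi$, since successors reached in plays are legal, and it witnesses the existential in constraint 2.
\item A legal $s \models \neg\phi$ is losing. If it is terminal it has no successors, by the requirement that terminal states admit no transitions, so constraint 3 holds vacuously. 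Otherwise every successor is winning, hence satisfies $\constraint \land \phi$.
\end{enumerate}

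For the ($\Leftarrow$) direction, assume the three constraints are valid. We show by induction that for every legal $s$: $s \models \phi$ iff $s$ is winning, and $s \models \neg\phi$ iff $s$ is losing. Because $\Pi$ is terminating, there is no infinite chain $s_1 \order s_2 \order \cdots$ of legal states. So $\order$ restricted to legal states is well-founded, and we may induct along the converse of $\order$: assume the claim for all successors of $s$ and prove it for $s$. Fix a legal $s$ and split into cases.
\begin{enumerate}
\item If $s \models \phi$, constraint 2 gives a legal successor $s'$ with $s' \models \neg\phi$. By the induction hypothesis $s'$ is losing, so $s$ is winning.
\item If $s \models \neg\phi$, constraint 3 says every successor is legal and satisfies $\phi$, so by induction every successor is winning. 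Hence $s$ is losing. This covers the terminal case too: there are no successors, and $s$ is losing by item 1 of Definition \ref{def:wlState}; constraint 1 is consistent with this.
\end{enumerate}
Since winning and losing are disjoint and exhaust the legal states, the biconditionals follow. Therefore $s \models \constraint \land \phi$ iff $s$ is a legal winning state, as Definition \ref{def:winForm} requires.

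The main obstacle is making the induction rigorous. Definition \ref{def:wlState} is an inductive definition, so we must use the least-fixpoint reading together with termination: we need well-foundedness, which comes from the absence of infinite plays. We also need to handle legality of successors carefully. A successor of a legal state may in principle not be legal, but constraints 2 and 3 as stated force the relevant successors to satisfy $\constraint$. In the ($\Rightarrow$) direction we need that successors of legal states are legal, which we take as implicit in the notion of play. We also need that in a terminating game the winning and losing sets cover all legal states, which is \cite{Fer2018} as cited above.
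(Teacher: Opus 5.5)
Your proof is correct and follows the paper's route: you verify the three constraints directly from Definition \ref{def:wlState} for ($\Rightarrow$), and you use well-founded induction along the successor relation, justified by termination, for ($\Leftarrow$). The differences are minor. The paper inducts on $(\winSet \cup \loseSet, \order^+)$ with a separate terminal base case that uses constraint 1, whereas you absorb terminal states into the vacuous cases of constraints 2 and 3. You also state explicitly the assumption, used silently in the paper's ($\Rightarrow$) argument, that successors of legal states are legal; the lemma fails without it.
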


\begin{theorem} \label{thm:isWinFormTermICG}
	$\winFormProb$ for terminating ICGs is decidable.
\end{theorem}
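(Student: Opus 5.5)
The plan is to reduce $\winFormProb$ for terminating ICGs to checking validity of closed LIA sentences, using Lemma \ref{lem:consWinForm} and the decidability of LIA (Presburger arithmetic) via quantifier elimination \cite{Coo1972,Mon2010}. The input is a terminating ICG $\Pi = \tuple{\var, \act, \constraint, \estate}$ together with an LIA formula $\phi$. By Lemma \ref{lem:consWinForm}, $\phi$ is the winning formula of $\Pi$ iff the three constraints of Definition \ref{def:consWinForm} are all valid. So it suffices to show that validity of each constraint is decidable.

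First, I note that each constraint is itself an LIA formula over $\var$. The global transition formula $\trans(\act) = \bigvee_{a \in \act} \trans(a)$ is a finite disjunction of LIA formulas over $\var \cup \var'$. The substitution $(\constraint \land \neg \phi)[\statevar/\statevar']$ replaces unprimed variables by primed ones, so it remains an LIA formula. The quantifications $\exists \statevar'$ and $\forall \statevar'$ are over the finitely many primed variables and are therefore LIA quantifiers. Implication is definable from $\neg$ and $\lor$. Consequently each constraint $\psi_i$ ($i = 1,2,3$) is an LIA formula whose free variables lie in $\var$, together with any free variables of $\phi$, $\constraint$ and $\estate$; without loss of generality these are all in $\var$.

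Next, $\psi_i$ is valid iff the closed sentence $\forall \statevar\, \psi_i$ is true over $\intNum$. Applying quantifier elimination to $\forall \statevar\, \psi_i$ yields an equivalent quantifier-free sentence with no variables. This is a Boolean combination of ground atoms $c = c'$, $c \leq c'$, $c \modulo_{c''} c'$, and so on, which can be evaluated directly. The decision procedure therefore constructs $\psi_1, \psi_2, \psi_3$, runs quantifier elimination on their universal closures, evaluates the results, and answers ``yes'' iff all three are true. Correctness then follows from Lemma \ref{lem:consWinForm}.

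The only delicate point is that the hypothesis ``terminating'' is essential, and I must use it precisely where Lemma \ref{lem:consWinForm} requires it. Without termination, the constraints could be satisfied by a $\phi$ that mislabels draw states. Since the problem is restricted to terminating ICGs by assumption, the procedure need not check termination, which is undecidable by Theorem \ref{thm:isTerminatingICG}. Beyond this, I expect no real obstacle: the remaining work is verifying that the syntactic constructions stay within the LIA grammar of Section \ref{sec:LIA}, for instance that the modulus in $\modulo_c$ is a fixed constant, which it is in all given formulas. The same argument, with $\neg\phi$ in place of $\phi$, handles $\loseFormProb$. For $\drawFormProb$ it reduces to checking validity of $\constraint \rightarrow \neg \phi$.
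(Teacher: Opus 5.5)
Your proposal is correct and is the paper's proof: it reduces, via Lemma~\ref{lem:consWinForm}, to the validity of the three LIA constraints of Definition~\ref{def:consWinForm} and then appeals to the decidability of Presburger arithmetic. Your extra steps (taking universal closures, eliminating quantifiers down to a ground sentence, and evaluating it) just unpack what the paper cites as one result, and treating termination as a promise on the input rather than something to check is exactly how the paper uses that hypothesis.
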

\begin{proofsketch}
	Lemma \ref{lem:consWinForm} reduces the winning formula problem for a terminating ICG to the validity problem of LIA.
	Thanks to the decidability of the latter problem \cite{Pre1929}, the former decision problem is decidable. 
\end{proofsketch}

\section{The Winning State Problem}
In this section, we show that the winning state problem is undecidable even for terminating IGCs.
We define a class of ICGs, namely finite-depth.
The winning state problem becomes decidable for terminating and finite-depth ICGs.

We hereafter provide the 3rd reduction from 2-counter machines to ICGs.

\begin{definition} \rm \label{def:thirdEncoding}
	Let $M$ be a 2-counter machine and $L$ the list of labeled instructions of $M$.
	An ICG $\Pi''(M) = \tuple{\var, \act, \constraint, \estate}$ is defined as
	\begin{itemize}
		\item $\statevar: \set{v_1, v_2}$;
		
		\item $\act: \set{a_0} \cup \set{a_l, a'_l \mid 1 \leq l \leq \len[L]  \text{ and } l: \halt \notin L}$; 
		
		\item $\constraint: v_1 \geq 0 \land \chi_{2, 1, |L|}(v_1) \land \chi_{7, 0, 2}(v_1) \land v_2 \geq 0$;
		
		\item $\estate\!:\! [(\bigvee_{l: \halt \in L} \varphi_{2, l}(v_1)) \!\land\! \chi_{7, 1, 2}(v_1)] \lor [\varphi_{7, 2}(v_1) \!\land\! v_2 = 0]$.
	\end{itemize}
\noindent
The transition formula $\trans(a_l)$ for action $a_l$ is defined as 
\begin{center}
		$\begin{cases}
				\varphi_{7, 0}(v_1) \land v'_1 = 7 v_1 \land v'_2 \geq 0 & \text{if } l = 0; \\
				
			\begin{aligned}
				&\varphi_{2, l}(v_1) \land \varphi_{7, 2}(v_1) \land v_2 > 0 \land \\
				&\quad (2^l 7 v'_1 = 2^j 3 v_1) \land v'_2 = v_2 - 1
			\end{aligned}
				& \text{if } l: \inc[1, j] \in L; \\
				
			\begin{aligned}
				&\varphi_{2, l}(v_1) \land \varphi_{7, 2}(v_1) \land v_2 > 0 \land \\
				&\quad (2^l 7 v'_1 = 2^j 5 v_1) \land v'_2 = v_2 - 1\\
			\end{aligned}
				& \text{if } l: \inc[2, j] \in L; \\
			
			\begin{aligned}
				&\varphi_{2, l}(v_1) \land \varphi_{7, 2}(v_1) \land v_2 > 0 \land \\
				&\quad [(\neg \varphi_{3, 0}(v_1) \land 2^l 3 \cdot 7 v'_1 = 2^j v_1) \lor \\
			&\quad\; (\varphi_{3, 0}(v_1) \land 2^l 7 v'_1 = 2^{j'} v_1)] \land \\
				&\quad v'_2 = v_2 - 1
			\end{aligned}
				& \text{if } l: \dec[1, j, j'] \in L; \\
				
			\begin{aligned}
				&\varphi_{2, l}(v_1) \land \varphi_{7, 2}(v_1) \land \\
				&\quad [(\neg \varphi_{5, 0}(v_1) \land 2^l 5 \cdot 7 v'_1 = 2^j v_1) \lor {}\\
			&\quad\;(\varphi_{5, 0}(v_1) \land 2^l 7 v'_1 = 2^{j'} v_1)] \land \\
				&\quad v'_2 = v_2 - 1
			\end{aligned}
				&\text{if } l: \dec[2, j, j'] \in L; \\
					
		\end{cases}$
	\end{center}
\end{definition}
\noindent
and $\trans(a'_l)$ for action $a'_l$ is defined as $\varphi_{2, l}(v_1) \land \varphi_{7, 1}(v_1) \land v'_1 = 7 v_1 \land v'_2 = v_2$.	

Definition \ref{def:thirdEncoding} is a slight modification of Definition \ref{def:secondEncoding}.
The ICG $\Pi''(M)$ has two state variables $v_1$ and $v_2$.
The first state variable $v_1$ has the same meaning as $v$ in $\Pi'(M)$ where $v_1(s)$ denotes a configuration $C$, iff $z_1 = 2^{C(r_0)} 3^{C(r_1)} 5^{C(r_2)} 7^i b$ and $z_2 \geq 0$, where $0 \leq i \leq 2$ and $b$ is not a multiple of $2$, $3$, $5$ or $7$.
When $i = 1$ or $2$, the second integer $v_2(s)$ denotes the remaining step of the instruction list $L$ to be executed.
The constraint $\constraint$ requires the value $v_1(s)$ of every legal state $s$ to be of the form $2^{l} 3^{m} 5^{n} 7^i b$ where $1 \leq l \leq \len[L]$ and the value $v_2(s)$ to be non-negative.
The termination condition $\estate$ stipulates that the value $v_1(s)$ of every terminal state $s$ is $2^{l} 3^{m} 5^{n} 7^i b$ where $l: \halt \in L$ and $1 \leq i \leq 2$.
We say a state $s$ is an \textit{initial} state of $\Pi''(M)$, iff $v_1(s) = 2^{l} 3^{m} 5^{n} 7^0 b$ where $1 \leq l \leq \len[L]$; otherwise it is \textit{non-initial}.
The ICG $\Pi''(M)$ satisfies the following properties:

\begin{lemma} \label{lem:isWinState}
	\begin{enumerate}		
		\item If $s$ is a state such that $v_1(s) = 2^{l} 3^{m} 5^{n} 7^1 b$ and $l: \halt \notin L$, then there is a unique successor state $s'$ of $s$ s.t. $v_1(s) = 2^{l} 3^{m} 5^{n} 7^2 b$ and $v_2(s') = v_2(s)$.
		
		\item If $s$ is a state such that $v_1(s) = 2^{C(r_0)} 3^{C(r_1)} 5^{C(r_2)} 7^2 b$, $v_2(s) > 0$ and $C(r_0): \halt \notin L$, then there is a unique successor state $s'$ of $s$ s.t. $v_1(s') = 2^{M(C)(r_0)} 3^{M(C)(r_1)} 5^{M(C)(r_2)} 7^1 b$ and $v_2(s') = v_2(s) \!-\! 1$.
			
		\item Every $\play$ of $\Pi''(M)$ is finite.
		
		\item If $s$ is a state on a play $\play$ from an initial state s.t. $v_1(s) = 2^{l} 3^{m} 5^{n} 7^1 b$, then $s$ is losing (resp. winning) iff $\play$ ends at a state $s^*$ s.t. $v_1(s^*) = 2^{l^*} 3^{m^*} 5^{n^*} 7^1 b$ (resp. $v_1(s^*) = 2^{l^*} 3^{m^*} 5^{n^*} 7^2 b$).
		
		\item If $s$ is a state on a play $\play$ from an initial state s.t. $v_1(s) = 2^{l} 3^{m} 5^{n} 7^2 b$, then $s$ is winning (resp. losing) iff $\play$ ends at a state $s^*$ s.t. $v_1(s^*) = 2^{l^*} 3^{m^*} 5^{n^*} 7^1 b$ (resp. $v_1(s^*) = 2^{l^*} 3^{m^*} 5^{n^*} 7^2 b$).
	\end{enumerate}
\end{lemma}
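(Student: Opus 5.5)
The proof will check the five items one at a time, reading them directly off the transition formulas of Definition~\ref{def:thirdEncoding}. The basic tool is the prime decomposition $v_1 = 2^{l}3^{m}5^{n}7^{i}b$, where $b$ is coprime to $210$. Each transition formula is a conjunction of an executability guard and a linear equation in $v'_1$, so uniqueness comes from two observations. First, at most one action's guard holds in a given state, because the guards fix both the order of $2$ (the instruction label $l$) and the order of $7$. Second, the linear equation of that action determines $v'_1$ uniquely.

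For Item~1, the guard $\varphi_{7,1}(v_1)\wedge\varphi_{2,l}(v_1)$ enables only $a'_l$, since the $a_l$ need order $2$ of $7$ and $a_0$ needs order $0$. That action multiplies $v_1$ by $7$ and keeps $v_2$. For Item~2, only $a_l$ with $l=C(r_0)$ is enabled. I will go through the four instruction types: solve the defining equation for $v'_1$, check divisibility, and read off the new exponents.
\begin{itemize}
\item In the $\dec$ case, the two disjuncts are mutually exclusive by $\varphi_{3,0}$ (resp.\ $\varphi_{5,0}$), which mirrors the zero test of $M$.
\item The exponent of $7$ falls from $2$ to $1$ and $b$ is untouched, so $v_1(s')$ encodes $M(C)$ with $7^1$.
\item $v_2$ decreases by $1$.
\end{itemize}

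Item~3 is the main point. Any play from an initial state takes one $a_0$ step, which picks an arbitrary $v'_2\ge 0$. After that it alternates $a'_l$ (order of $7$ goes $1\to2$) with $a_l$ (order of $7$ goes $2\to1$ and $v_2$ drops by $1$). No action ever returns the order of $7$ to $0$, and $v_2$ never increases. Hence after $a_0$ the play has at most $2v_2(s_1)+1$ further moves, where $v_2(s_1)$ is the value fixed right after $a_0$. A play that starts at a non-initial legal state is bounded by its own $v_2$ in the same way. The play must then stop at a state with no successor, and I will check that such a state is either a halting label with order of $7$ in $\{1,2\}$, or order of $7$ equal to $2$ with $v_2=0$. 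Both are in $\estate$, so the play is finite.

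Items~4 and~5 follow by induction on the number of remaining moves to the end of the (unique, by Items~1 and~2) play through the non-initial state $s$. Every non-initial, non-terminal state has exactly one successor. With a single successor, $s$ is winning iff that successor is losing, so the status alternates along the play. The base case is the last state $s^*$, which is terminal and hence losing. Since the order of $7$ alternates between $1$ and $2$ along the play, $s$ is losing exactly when its order of $7$ has the same parity as that of $s^*$, and winning when the parities differ. This gives both items.

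The step I expect to be the main obstacle is the exhaustive check in Items~2 and~3 that no stray state has an unintended successor or an unintended lack of one. Two cases need care. States with a halting label and order $1$ or $2$ of $7$ must be terminal. States with order $2$ of $7$, a non-halting label and $v_2=0$ must be exactly the $v_2=0$ terminal states. I will also confirm the paper's standing requirement that terminal states have no successors.
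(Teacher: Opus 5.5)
Your proposal is correct and follows essentially the same route as the paper. You read Items 1--2 off the guards, prove Item 3 by a descent argument (your explicit bound of $2v_2+1$ moves after $a_0$ is a tidier version of the paper's case analysis), and prove Items 4--5 by induction along the unique remaining play, with status alternating back from the losing terminal state $s^*$. One caveat for your planned exhaustive check: the claim that order-$2$ states with $v_2=0$ have no successor, and hence your bound, requires the guard $v_2>0$. That guard is missing from the printed $\dec[2,j,j']$ clause of Definition~\ref{def:thirdEncoding}, a typo that the paper's own proof also tacitly assumes away.
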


\begin{theorem} \label{thm:isWinState}
	$\winStateProb$ (even for terminating ICGs) is undecidable.
\end{theorem}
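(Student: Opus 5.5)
We reduce the halting problem $\haltProb$ to $\winStateProb$ restricted to terminating ICGs, using the encoding $\Pi''(M)$ of Definition~\ref{def:thirdEncoding}. By Item~3 of Lemma~\ref{lem:isWinState}, $\Pi''(M)$ is terminating for every 2-counter machine $M$. So it suffices to give a computable map from an instance $(M, C)$ to a state $s_C$ of $\Pi''(M)$ such that $s_C$ is winning iff $M$ halts on $C$.

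We take $s_C$ to be an \emph{initial} state. Concretely, set $v_1(s_C) = 2^{C(r_0)} 3^{C(r_1)} 5^{C(r_2)}$ (so $b = 1$ and the order of $7$ is $0$), and let $v_2(s_C)$ be arbitrary, say $0$. This state is legal. The terminal condition requires the order of $7$ to be at least $1$, so $s_C$ is not terminal. Its only enabled action is $a_0$, whose successors are exactly the states $s_k$ with $v_1(s_k) = 7 v_1(s_C)$ and $v_2(s_k) = k$ for some $k \geq 0$. Intuitively, the first player chooses a step budget $k$. Hence $s_C$ is winning iff some $s_k$ is losing.

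Next we analyze $s_k$ using Items~1, 2 and~4 of Lemma~\ref{lem:isWinState}.

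From $s_k$, which has $7$-order $1$, the play is forced. It alternates between action $a'_l$, which raises the $7$-order to $2$ and keeps $v_2$, and a simulation step, which executes one instruction of $M$, returns the $7$-order to $1$ and decrements $v_2$. The play stops in one of two ways:
\begin{itemize}
\item It reaches a halting label at $7$-order $1$. This happens iff $M$ halts on $C$ within $k$ steps.
\item It reaches $7$-order $2$ with $v_2 = 0$, i.e.\ the budget ran out first.
\end{itemize}
(A halting label at order $2$ is also terminal, but it is only reachable right after order $1$ at the same label, which is already terminal.) By Item~4, $s_k$ is losing iff the forced play ends at $7$-order $1$, i.e.\ iff $M$ halts from $C$ within at most $k$ steps.

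Therefore $s_C$ is winning iff there exists $k$ such that $M$ halts on $C$ within $k$ steps, iff $M$ terminates on $C$. Since $\haltProb$ is undecidable and the map $(M, C) \mapsto (\Pi''(M), s_C)$ is computable, $\winStateProb$ is undecidable even for terminating ICGs.

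The main obstacle is the careful parity and boundary bookkeeping in the claim \lq\lq$s_k$ losing iff $M$ halts within $k$ steps\rq\rq. Three points need checking:
\begin{itemize}
\item The step count in the budget must line up with the decrement of $v_2$.
\item The play must not get stuck prematurely at a non-halting label with $7$-order $1$. Action $a'_l$ is enabled there, so it does not.
\item A halting label reached at $7$-order $1$ must be terminal, which holds because $\chi_{7,1,2}$ includes order $1$.
\end{itemize}
I would handle these by induction on $k$, using Items~1 and~2 to make each step unique. For the underlying reduction of $\haltProb$ I would use the undecidability of halting from a given configuration, which is standard.
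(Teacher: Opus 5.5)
Your proposal is correct and is essentially the paper's proof. Both use the same reduction from $\haltProb$ via $\Pi''(M)$, starting from an initial state whose only move $a_0$ lets the first player pick the budget $v_2 = k$, and both rely on Items 1, 2 and 4 of Lemma~\ref{lem:isWinState} to show that the successor $s_k$ is losing iff $M$ halts from $C$ within $k$ steps; your explicit quantification over $k$ states the ($\Leftarrow$) direction more precisely than the paper's contrapositive, which leaves the "for every successor" step implicit.
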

\begin{proof}
	By Item 3 of Lemma \ref{lem:isWinState}, every play of $\Pi''(M)$ is finite and hence $\Pi''(M)$ is terminating.

	Let $C$ be a configuration of a 2-counter machine $M$ and $s$ a state of $\Pi''(M)$ s.t. $v_1(s) = 2^{C(r_0)} 3^{C(r_1)} 5^{C(r_2)} 7^0 b$.	
	We hereafter prove that the state $s$ is winning iff $M$ is terminating in $C$.
	
	($\Rightarrow$):
	Suppose that $s$ is winning.
	By the definition of winning and losing states (cf. Definition \ref{def:wlState}), there is a losing successor state $s'$ of $s$.
	According to the transition formula $\trans(a_l)$, $v_1(s') = 2^{l} 3^{m} 5^{n} 7^1 b$.
	$s'$ is on a play $\play$ from $s$.	
	By Items 4 and 5 of Lemma \ref{lem:isWinState}, $\play$ ends at a terminal state $s^*$ s.t. $v_1(s^*) = 2^{l^*} 3^{m^*} 5^{n^*} 7^1 b$ where $l^*: \halt \in L$.
	By Items 1 and 2 of Lemma \ref{lem:isWinState}, $\Pi''(M)$ emulates the behavior of $M$.
	Hence, $M$ is terminating in $C$.
	
	($\Leftarrow$): We prove by contrapositive.	
	Suppose that $s$ is not winning.
	Since $\Pi''(M)$ is terminating, it has no draw states, and hence $s$ is losing.
	By the definition of winning and losing states, every successor state $s'$ of $s$ is winning.	
	According to the transition formula $\trans(a_l)$, $v_1(s') = 2^{l} 3^{m} 5^{n} 7^1 b$.
	$s'$ is on a play $\play$ from $s$.	
	By Items 4 and 5 of Lemma \ref{lem:isWinState}, $\play$ ends at a terminal state $s^*$ s.t. $v_1(s^*) = 2^{l^*} 3^{m^*} 5^{n^*} 7^2 b$ where $l^*: \halt \notin L$ and $v_2(s^*) = 0$.
	By Items 1 and 2 of Lemma \ref{lem:isWinState}, $\Pi''(M)$ emulates the behavior of $M$.
	Hence, $M$ is not terminating in $C$.
	
	Since the halting problem of 2-counter machines is undecidable, deciding if a state is winning for an LIA-definable ICG is undecidable.	
\end{proof}

We define a class of terminating ICGs s.t. its winning state problem becomes decidable.
\begin{definition} \rm \label{def:depth}
	Let $s$ be a winning or losing state of an ICG.
	The depth $d(s)$ of $s$ is defined as: \\ 
		$\begin{cases}
			0 & \text{if } s \text{ is a terminal state}; \\
			
			\min \set{d(s') \mid s' \in \loseSet \cap \suc(s)}  + 1 & \text{if } s \text{ is a winning state}; \\
			
			\max \set{d(s') \mid s' \in \suc(s)} + 1 & \text{otherwise}. 
		\end{cases}$
	where $\min(S)$ is the least element of $S$ and $\max(S)$ is the greatest element of $S$.
\end{definition}

We say a state $s$ is of depth $i$, iff $\dep[s] = i$.
The depth $\dep[s]$ is undefined for some state.
We illustrate this in the following example.

\begin{example} \label{exm:finDep}
	An ICG $\Pi = \tuple{\var, \act, \constraint, \estate}$ is defined as
	\begin{itemize}
		\item $\statevar: \set{v_1, v_2}$ and $\act: \set{a}$;
		
		\item $\constraint: (v_1 = 0 \land v_2 \geq 0) \lor (v_1 = 1 \land v_2 = 0) $;
		\item $\estate: v_1 = 0 \land v_2 = 0$;
		
		\item $\trans(a): (v_1 = 0 \land v_2 > 0 \land v'_1 = v_1 \land v'_2 = v_2 - 1) \lor \\$ $\hspace*{10.5mm} (v_1 = 1 \land v_2 = 0 \land v_1' = 0 \land v'_2 \modulo_2 1)$.
	\end{itemize}
	
	We use a pair $(v_1(s), v_2(s))$ for a state $s$.
	It is easily verified that every state $(0, i)$ where $i$ is even (resp. odd) is losing (resp. winning).
	Since state $(1, 0)$ has a unique successor state $(0, i)$ where $i$ is odd, $(1, 0)$ is losing.	
	By the definition of depth, we obtain that $\dep[(0, i)] = i$.
	However, the set $S$ of the depth of every states s.t. $i \in \set{1, 3, \cdots}$ and has no greatest element.
	Hence, $\dep[(1, 0)]$ is undefined.
	\qed
\end{example}

We say an ICG is \textit{finite-depth}, iff the depth of every winning or losing state $s$ is defined.

\begin{algorithm}[!t]
	\small
	\caption{${\tt IsWinningState}(\Pi, s)$}
	\label{alg:isWinState}
	\KwIn{$\Pi = \tuple{\var, \act, \constraint, \estate}$: an LIA-definable ICG; \\
		\hspace*{9.5mm}$s$: a legal state.}
	\KwOut{$\true$, if $s$ is a winning state; $\false$, otherwise.}
	
	$\varphi_{0} \assign \false$ and $\psi_{0} \assign (\constraint \land \estate)$ 
	
	\lIf{$s \models \psi_{0}$}
	{
		\Return{$\false$}
	}
	\Else{
		$i \assign 1$
		
		\While{{\bf true}}
		{
			$\varphi_{i} \assign$ a quantifier-free formula equivalent to $\hspace*{6.5mm} \exists \statevar' \{\trans(\act) \land (\constraint \land \psi_{i - 1})[\statevar/\statevar']\}$

			$\psi_{i} \assign$ a quantifier-free formula equivalent to $\hspace*{6.5mm} \forall \statevar' \{\trans(\act) \rightarrow (\constraint \land \varphi_{i - 1})[\statevar/\statevar']\}$
			
			\lIf{$s \models \constraint \land \varphi_{i}$}
			{
				\Return{$\true$}
			}
			
			\lIf{$s \models \constraint \land \psi_{i}$}
			{
				\Return{$\false$}
			}
			
			$i \assign i + 1$
		}
	}
\end{algorithm}

\looseness=-1
Algorithm \ref{alg:isWinState} iteratively constructs two class formulas: $\psi_i$'s and $\varphi_i$'s, representing the two sets of winning states with depth at most $i$ and of losing states with depth at most $i$, respectively.
Initially, the formula $\varphi_0$ is $\false$, meaning that no terminal state is winning, and $\psi_0$ is $\constraint \land \estate$, corresponding to Condition 1 of Definition \ref{def:wlState} (Line 1).
Then, Algorithm \ref{alg:isWinState} enters a loop (Lines 5 - 10).
According to Condition 2 of Definition \ref{def:wlState}, we construct the formula $\varphi_i$ from the formula $\psi_{i - 1}$, representing the set of winning states that have at least one successor losing state with depth at most $i - 1$ (Line 6).
The notation $(\constraint \land \psi_{i - 1})[\statevar/\statevar'])$ denotes the formula obtained by replacing every occurrence of $v \in \statevar$ in $\constraint \land \psi$ with $v'$.
Similarly, $\psi_i$ denotes the subset of losing states all of whose successor states satisfying $\varphi_{i - 1}$  (Line 7).
If $s$ satisfies $\psi_i$, then $s$ is a losing state (Lines 2 \& 9).
Similarly, $s$ is a winning state when $s \models \varphi_i$ (Line 8).


Algorithm \ref{alg:isWinState} enjoys the termination, soundness and completeness properties.

\begin{lemma} \label{lem:isWinStateAlg}
	Let $\Pi$ be a terminating finite-depth ICG and $s$ a legal state.
	Then, 
	\begin{itemize}		
		\item Termination: Algorithm ${\tt IsWinningState}(\Pi, s)$ always terminates.
		
		\item Soundness: If ${\tt IsWinningState}(\Pi, s)$ returns $\true$, then $s$ is a winning state.
		
		\item Completeness: If $s$ is a winning state, then ${\tt IsWinningState}(\Pi, s)$ returns $\true$.		
	\end{itemize}	
\end{lemma}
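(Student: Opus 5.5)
The plan is to prove a single invariant about the formulas built by Algorithm~\ref{alg:isWinState} and derive the three properties from it. The invariant states that for every $i \geq 0$ and every legal state $t$:
\begin{itemize}
\item[(a)] $t \models \constraint \land \varphi_i$ iff $t \in \winSet$ and $\dep[t] \leq i$;
\item[(b)] $t \models \constraint \land \psi_i$ iff $t \in \loseSet$ and $\dep[t] \leq i$.
\end{itemize}
I would first note that each $\varphi_i, \psi_i$ is effectively computable. They are obtained from LIA formulas by quantifier elimination \cite{Coo1972,Mon2010}. Testing $s \models \constraint \land \varphi_i$ for a concrete integer state $s$ is then just evaluation of a quantifier-free formula. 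So every single iteration of the loop terminates, and the only issue is the number of iterations.

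The invariant is proved by induction on $i$. For the base case, $\varphi_0 = \false$ is correct because terminal states are losing and every winning state has depth $\geq 1$. Also $\psi_0 = \constraint \land \estate$ captures exactly the legal states of depth $0$, i.e.\ the terminal ones. I would need to treat carefully a legal non-terminal state without successors. The requirement in Definition~\ref{def:game} only says that states with successors are non-terminal. Such a state would be losing by Condition~3 of Definition~\ref{def:wlState}, with depth $\max\emptyset + 1$. I would adopt the convention that it has depth $1$; it is then caught by $\psi_1$, since the universal condition there holds vacuously.

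For the inductive step, the case of $\varphi_i$ goes as follows. By (b) for $i-1$, $t \models \varphi_i$ iff $t$ has a successor that is legal, losing, and of depth $\leq i-1$. This holds iff $t \in \winSet$ and $\min\set{\dep[s'] \mid s' \in \loseSet \cap \suc(t)} \leq i-1$, that is, iff $t \in \winSet$ and $\dep[t] \leq i$. The case of $\psi_i$ is analogous. By (a) for $i-1$, $t \models \psi_i$ iff every successor of $t$ is a legal winning state of depth $\leq i-1$. Such a $t$ is losing by Condition~3, and its depth is $\max + 1 \leq i$. Conversely, a losing $t$ with $\dep[t] \leq i$ has all successors winning with depth $\leq i-1$. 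Here I would use that successors of legal states are legal, or else handle legality through the $\constraint[\statevar/\statevar']$ conjunct.

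With the invariant in hand, the three properties follow. For termination: since $\Pi$ is terminating, there are no draw states, so $s \in \winSet \cup \loseSet$. Since $\Pi$ is finite-depth, $k = \dep[s]$ is defined, and by the invariant the algorithm returns at line~2 or at iteration $\max(k,1)$ at the latest. For soundness: a $\true$ answer means $s \models \constraint \land \varphi_i$ for some $i$, so $s \in \winSet$ by (a). For completeness: if $s \in \winSet$, then by disjointness of $\winSet$ and $\loseSet$ together with (b), no $\false$ return (line~2 or line~9) can ever fire. By termination the algorithm must therefore return $\true$.

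I expect the main obstacle to be the careful verification of the inductive step against Definition~\ref{def:depth}, in particular the $\max$ clause. The $\max$ over successors is only guaranteed to exist because of the finite-depth assumption, and the induction must only ever use depths of states that are already known to be winning or losing. The edge cases (successor-less non-terminal states and the legality of successors) are the places where the argument could quietly break, so I would check those first.
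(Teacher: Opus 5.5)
Your proposal is correct and takes essentially the paper's approach: your invariant (a)/(b) is precisely the paper's auxiliary Lemma~\ref{lem:isWinStateAlg2}, proved by induction on $i$, and termination, soundness and completeness are then read off from it. If anything, your version is more careful than the paper's. You get soundness directly from (a), you use $\winSet \cap \loseSet = \emptyset$ together with (b) to rule out a premature $\false$ return in the completeness argument, and you make explicit the edge cases (successor-less non-terminal states, legality of successors) that the paper leaves implicit.
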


As a corollary of Lemma \ref{lem:isWinStateAlg}, we obtain the following decidability result.
\begin{theorem} \label{thm:isWinStateTermFD}
	$\winStateProb$ for terminating finite-depth ICGs is decidable.
\end{theorem}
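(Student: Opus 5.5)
The proof is essentially a corollary of Lemma \ref{lem:isWinStateAlg}, so I will exhibit Algorithm \ref{alg:isWinState} as a decision procedure and check that each of its steps is effective. The decision procedure takes a terminating finite-depth ICG $\Pi$ and a legal state $s$ and runs ${\tt IsWinningState}(\Pi, s)$. Its output is \emph{yes} exactly when the algorithm returns $\true$.

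First I will argue that every single line of the algorithm is computable. Lines 6 and 7 build LIA formulas from $\trans(\act)$, $\constraint$ and the previously obtained quantifier-free formulas. They then apply quantifier elimination, which is effective for LIA \cite{Coo1972,Mon2010}, so $\varphi_i$ and $\psi_i$ are computable quantifier-free formulas. The tests $s \models \constraint \land \varphi_i$ and $s \models \constraint \land \psi_i$ (and the test $s \models \psi_0$ on Line 2) evaluate a quantifier-free formula at a concrete integer state. This amounts to finitely many additions, comparisons and divisibility checks, hence it is decidable. Each loop iteration is therefore effective.

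Next I will invoke the three items of Lemma \ref{lem:isWinStateAlg}. By Termination, the loop halts after finitely many iterations, so the procedure always produces an answer. By Soundness, the answer $\true$ is returned only when $s \in \winSet$. By Completeness, whenever $s \in \winSet$ the algorithm returns $\true$. The algorithm returns either $\true$ or $\false$, so it returns $\false$ exactly when $s \notin \winSet$; because $\Pi$ is terminating, such an $s$ is then losing. Consequently the procedure decides $\winStateProb$ on this class.

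The only point needing care is effectiveness of Lines 6 and 7, together with the fact that the input $\Pi$ is promised to be terminating and finite-depth. Without that promise, Termination in Lemma \ref{lem:isWinStateAlg} is lost; the example of Theorem \ref{thm:isWinState} shows that this promise cannot simply be dropped. Given the lemma, the remaining argument is routine bookkeeping.
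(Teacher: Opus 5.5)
Your proposal is correct and takes the same route as the paper, which obtains Theorem~\ref{thm:isWinStateTermFD} directly as a corollary of Lemma~\ref{lem:isWinStateAlg} (termination, soundness and completeness of Algorithm~\ref{alg:isWinState}). Your two added observations only make explicit what the paper leaves implicit: Lines 6--7 are effective because LIA admits effective quantifier elimination, and testing $s \models \constraint \land \varphi_i$ is just evaluating a quantifier-free formula at a concrete integer state.
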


\begin{algorithm}[!t]
	\small
	\caption{${\tt IsWinningState'}(\Pi, s)$}
	\label{alg:isWinStateMiniMax}
	\KwIn{$\Pi = \tuple{\var, \act, \constraint, \estate}$: an LIA-definable ICG; \\
		\hspace*{9.5mm}$s$: a legal state.}
	\KwOut{$\true$, if $s$ is a winning state; $\false$, otherwise.}
	
	\lIf{$s \models \constraint \land \estate$}
	{		
		\Return{$\false$}
	}
	
	\ForEach{legal successor state $s'$ of $s$}
	{
		\lIf{${\tt IsWinningState'}(\Pi, s') = \false$}
		{
			\Return{$\true$}
		}
	}
	
	\Return{$\false$}
\end{algorithm}

Finally, we compare Algorithm \ref{alg:isWinState} to the minimax-style algorithm, depicted in Algorithm \ref{alg:isWinStateMiniMax} and illustrated in \cite{BelM2020}. 
Algorithm \ref{alg:isWinStateMiniMax} is a forward method that starts from a state $s$ and then obtain all of descendant states until it reaches any terminal state.
It decides if $s$ is winning based on its successor states according to Definition \ref{def:wlState}.
If $s$ is a terminal state, then it is losing (Line 1).
If at least one losing successor state of $s$ exits, then $s$ is winning (Lines 2 \& 3).
Otherwise, it is losing (Line 4).
Compared to Algorithm \ref{alg:isWinState}, the minimax-style algorithm does not terminates when a losing state has infinitely many successor state.
We illustrate this in the following example.


\begin{example}
	\looseness=-1
	We consider the ICG that is the same as Example \ref{exm:finDep} except the transition formula:
	\begin{itemize}
		\item $\trans(a): (v_1 = 0 \land v_2 > 0 \land v'_1 = 0 \land v'_2=0) \lor \\$ $\hspace*{10.5mm} (v_1 = 1 \land v_2 = 0 \land v'_1 = 0 \land v'_2 > 0)$.
	\end{itemize}
	
%
		Then state $(1, 0)$ has infinitely many successor states $(0, i)$ for $i > 0$.
		Such successor state $(0, i)$ has a unique successor $(0, 0)$, which is a terminal state, and hence is a winning state.
		It follows that $(1, 0)$ is losing.
		The loop (Lines 2 - 3) in Algorithm \ref{alg:isWinStateMiniMax} does not terminate for state $(1, 0)$.
		
		It first constructs the formula $\psi_0: v_1 = 0 \land v_2 = 0$, and then the formula $\varphi_1: v_1 = 0 \land v_2 > 0$ that contains all states $(0, i)$ with $i > 0$, and finally the formula $\psi_2: (v_1 = 0 \lor v_1 = 1) \land v_2 = 0$ that contains the state $(1, 0)$.
		Thus, Algorithm \ref{alg:isWinState} terminates with the label $\false$, meaning $(1, 0)$ is losing.
		\qed
\end{example}

\section{The Losing Formula and State Problems} \label{sec:loseStateFormProb}
In this section, we extend Theorems \ref{thm:isWinForm}, \ref{thm:isWinFormTermICG}, \ref{thm:isWinState} and \ref{thm:isWinStateTermFD} to losing states and formulas.
%

\begin{theorem} \label{thm:isLoseForm}
	$\loseFormProb$ is undecidable.
\end{theorem}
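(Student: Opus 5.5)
We reuse the reduction $\Pi'(M)$ of Definition \ref{def:secondEncoding} and Lemma \ref{lem:isWinForm}, and reduce the mortality problem $\mortProb$ to $\loseFormProb$. The candidate losing formula is the complement of the winning formula used in the proof of Theorem \ref{thm:isWinForm}. Concretely, take $\psi$ to be $v \geq 0 \land \chi_{2, 1, \len[L]}(v) \land \varphi_{7, 1}(v)$. Since $\constraint$ forces the order of $7$ in $v$ to be $0$ or $1$, $\psi$ characterizes, within the legal states, exactly the states with $v(s) = 2^{l} 3^{m} 5^{n} 7^1 b$. The claim is that $\psi$ is the losing formula of $\Pi'(M)$ iff $M$ is terminating on all configurations.

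($\Leftarrow$): Assume $M$ halts on every configuration. Let $s \models \constraint \land \psi$, so $v(s) = 2^{C(r_0)} 3^{C(r_1)} 5^{C(r_2)} 7^1 b$ for some configuration $C$. By Items 1--3 of Lemma \ref{lem:isWinForm}, the unique play from $s$ emulates the run of $M$ on $C$. That run halts, so the play is finite, and Item 4 gives that $s$ is losing. Conversely, let $s \models \constraint \land \neg\psi$, so the order of $7$ in $v(s)$ is $0$. Its play is likewise finite, and Item 4 gives that $s$ is winning, hence not losing. So $\psi$ is exactly the losing formula.

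($\Rightarrow$): Assume $\psi$ is the losing formula, and take any configuration $C$. If $C(r_0)$ is $\halt$, then $M$ trivially terminates on $C$. Otherwise, take $b = 1$ and $s$ with $v(s) = 2^{C(r_0)} 3^{C(r_1)} 5^{C(r_2)} 7$. Then $s$ is legal and satisfies $\psi$, so $s$ is losing. By Item 4 of Lemma \ref{lem:isWinForm}, the unique play from $s$ is finite. By Items 1 and 2, that play tracks the run of $M$ on $C$, alternating $7^1$ and $7^0$ states. It can only stop at a terminal state, which has a $\halt$ label. Hence $M$ terminates on $C$.

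The step needing most care is the correspondence in the ($\Rightarrow$) direction. We must check that a finite play really ends at a terminal state and not at a legal non-terminal state with no moves. A state with order $7^0$ always has action $a_{\len[L]+1}$ available. A state with order $7^1$ and a non-$\halt$ label always has its instruction action enabled, because the $\dec$ cases cover both the zero and nonzero counter. So plays stop only at $\halt$ labels, and undecidability of $\mortProb$ transfers to $\loseFormProb$.

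As an alternative, one could argue directly from Theorem \ref{thm:isWinForm}. This would rely on the fact that in $\Pi'(M)$ the legal states split as winning, losing, or draw, and that under mortality there are no draws. The direct argument above is cleaner, so I would present that.
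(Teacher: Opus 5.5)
Your proof is correct and follows the paper's route. It uses the same candidate formula $v \geq 0 \land \chi_{2,1,\len[L]}(v) \land \varphi_{7,1}(v)$ on the same reduction $\Pi'(M)$, with undecidability inherited from $\mortProb$; your two directions spell out what the paper leaves as ``similarly to the proof of Theorem~\ref{thm:isWinForm}.'' Your check that plays stop only at $\halt$ labels is a worthwhile addition, but it relies on the intended reading of $\trans(a_{\len[L]+1})$: the guard $\varphi_{2,\len[L]+1}(v)$ as literally written in Definition~\ref{def:secondEncoding} is never satisfied by a legal state.
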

\begin{proofsketch}
	Let $\phi$ be the formula $v \geq 0 \land \chi_{2, 1, \len[L]}(v) \land \varphi_{7, 1}(v)$. 
	The formula $\phi$ exactly characterizes the set of positive integers $2^l 3^m 5^n 7^1 b$ where $1 \leq l \leq \len[L]$ and $b$ is not a multiple of $2$, $3$, $5$ or $7$.
	Similarly to the proof of Theorem \ref{thm:isWinForm}, $\phi$ is the losing formula for $\Pi'(M)$ iff $M$ is terminating on all configurations.
	Since the mortality problem of 2-counter machines is undecidable, deciding if an LIA-formula is the losing formula for a ICG is undecidable.
\end{proofsketch}

\begin{theorem} \label{thm:isLoseFormTermICG}
	$\loseFormProb$ for terminating ICGs is decidable.
\end{theorem}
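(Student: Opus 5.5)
The plan is to reduce $\loseFormProb$ for terminating ICGs to $\winFormProb$ for terminating ICGs, which is decidable by Theorem \ref{thm:isWinFormTermICG}. Alternatively, we can reduce it directly to LIA validity, as in the proof of that theorem. The key fact is the one noted before Definition \ref{def:consWinForm}: a terminating ICG has no draw states, so every legal state is either winning or losing, and these two sets are disjoint.

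First, we show that for a terminating ICG $\Pi = \tuple{\var, \act, \constraint, \estate}$, a formula $\phi$ is a losing formula of $\Pi$ iff $\neg \phi$ is a winning formula of $\Pi$. Suppose $\phi$ is losing, and let $s$ be a legal state. Then $s \models \constraint \land \neg\phi$ iff $s$ is legal and not losing. By the partition of legal states into $\winSet$ and $\loseSet$, this holds iff $s$ is a legal winning state, so $\neg\phi$ is a winning formula. The converse follows by the symmetric argument.

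Second, we invoke Lemma \ref{lem:consWinForm} with $\neg\phi$ in place of $\phi$. It gives that $\phi$ is the losing formula iff the following three LIA sentences are valid:
\begin{enumerate}
\item $\constraint \land \estate \rightarrow \phi$;
\item $(\constraint \land \neg\phi) \rightarrow \exists \statevar' \{\trans(\act) \land (\constraint \land \phi)[\statevar/\statevar']\}$;
\item $(\constraint \land \phi) \rightarrow \forall \statevar' \{\trans(\act) \rightarrow (\constraint \land \neg\phi)[\statevar/\statevar']\}$.
\end{enumerate}
Each of these is an LIA formula, built from $\constraint$, $\estate$, $\trans(a)$ and $\phi$ using Boolean connectives and quantifiers. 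The validity of a formula is equivalent to the truth of its universal closure, so Presburger decidability \cite{Pre1929} decides all three. Equivalently, we can simply run the decision procedure of Theorem \ref{thm:isWinFormTermICG} on the pair $(\Pi, \neg\phi)$.

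The only point requiring care is the first step. There we must use termination to exclude draw states; without termination the complementation argument fails, consistent with Theorem \ref{thm:isLoseForm}. Beyond that, the argument is routine once Lemma \ref{lem:consWinForm} is available.
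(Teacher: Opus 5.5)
Your proposal is correct and matches the paper's proof. Like the paper, you use the absence of draw states in terminating ICGs to show that $\phi$ is a losing formula iff $\neg\phi$ is a winning formula, then invoke Theorem~\ref{thm:isWinFormTermICG}; your complementation argument and the three constraints unfolded from Lemma~\ref{lem:consWinForm} just spell out details the paper leaves implicit.
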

\begin{proof}
	In terminating ICGs, the negation of the winning formula is the losing formula, and vice versa.
	We reduce the losing formula problem to the winning formula problem for terminating ICGs.
	By Theorem \ref{thm:isWinFormTermICG}, the latter decision problem is decidable.
	So is the former decision problem.
\end{proof}

\begin{theorem} \label{thm:isLoseState}	
	$\loseStateProb$ (even for terminating ICGs) is undecidable.
\end{theorem}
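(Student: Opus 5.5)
The plan is to reduce $\winStateProb$ for terminating ICGs, which Theorem \ref{thm:isWinState} shows undecidable, to $\loseStateProb$ for terminating ICGs. The key fact is that a terminating ICG has no draw states, so every legal state is either winning or losing and never both. Thus for a terminating ICG $\Pi$ and a legal state $s$, $s$ is losing iff $s$ is not winning.

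Concretely, I will reuse the construction $\Pi''(M)$ from Definition \ref{def:thirdEncoding}. By Item 3 of Lemma \ref{lem:isWinState} it is terminating. Fix a configuration $C$ of $M$ and a legal initial state $s$ with $v_1(s) = 2^{C(r_0)} 3^{C(r_1)} 5^{C(r_2)} 7^0 b$, where $b$ is not divisible by $2,3,5,7$ and $v_2(s)\geq 0$ is arbitrary. The proof of Theorem \ref{thm:isWinState} shows that $s$ is winning iff $M$ is terminating in $C$. Since $\Pi''(M)$ has no draw states, $s$ is losing iff $M$ is not terminating in $C$.

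A decider for $\loseStateProb$ on terminating ICGs would therefore decide the complement of $\haltProb$, and hence $\haltProb$ itself, since decidable sets are closed under complement. This contradicts the undecidability of the halting problem for 2-counter machines. The map $(M,C)\mapsto(\Pi''(M),s)$ is clearly computable, because $\Pi''(M)$ is built syntactically from $L$ and $s$ from $C$.

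The only step needing care is the dichotomy "winning or losing but not both" for terminating games. This holds because the depth-free recursion of Definition \ref{def:wlState}, applied by well-founded induction along the terminating successor relation, classifies every legal state. The paper already invokes this fact in the proof of Theorem \ref{thm:isWinState} and in the discussion before Definition \ref{def:consWinForm}, so I will cite it rather than reprove it. Everything else is routine bookkeeping.

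Alternatively, one could modify $\Pi''(M)$ so that $s$ has a single forced move $a_0'$ into an initial-type state, flipping parity. The complementation argument is simpler, however, and I will use it.
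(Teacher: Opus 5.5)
Your proof is correct and is essentially the paper's argument. For terminating ICGs every legal state is winning or losing but not both, so $\loseStateProb$ is the complement of $\winStateProb$ and inherits its undecidability through the reduction $(M,C)\mapsto(\Pi''(M),s)$ from $\haltProb$. Your justification is also slightly more precise than the paper's sketch: the sketch cites only the disjointness of $\winSet$ and $\loseSet$, whereas the fact actually needed, which you state, is that termination rules out draw states, so $\winSet \cup \loseSet$ covers all legal states.
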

\begin{proofsketch}
	For terminating ICGs, $\winSet$ and $\loseSet$ are disjoint.
	It follows that $\loseStateProb$ is the complement of $\winStateProb$ and hence $\loseStateProb$ is also undecidable.
%
%
%
\end{proofsketch}

\begin{theorem} \label{thm:isLoseStateTermICG}
	$\loseStateProb$ for terminating finite-depth ICGs is decidable.
\end{theorem}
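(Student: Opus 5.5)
The plan is to reduce $\loseStateProb$ for terminating finite-depth ICGs to $\winStateProb$ for the same class, and then invoke Theorem \ref{thm:isWinStateTermFD}. This mirrors the proof of Theorem \ref{thm:isLoseState}, but now the reduction transfers decidability rather than undecidability.

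First, let $\Pi = \tuple{\var, \act, \constraint, \estate}$ be a terminating finite-depth ICG and $s$ a state. Checking whether $s$ is legal amounts to checking $s \models \constraint$. This is decidable, since $s$ assigns concrete integers to the finitely many variables in $\var$, so $\constraint(s)$ is a closed LIA sentence; after quantifier elimination \cite{Coo1972} it is directly evaluable. If $s$ is not legal, we answer ``no''.

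Second, since $\Pi$ is terminating, every play is finite and there are no draw states \cite{Fer2018}. So every legal state lies in exactly one of $\winSet$ and $\loseSet$. Hence, for a legal state $s$, $s \in \loseSet$ iff $s \notin \winSet$. We therefore run ${\tt IsWinningState}(\Pi, s)$ (Algorithm \ref{alg:isWinState}) and negate its answer. By Lemma \ref{lem:isWinStateAlg}, this call terminates on terminating finite-depth ICGs. By the soundness and completeness items of that lemma, it returns $\true$ exactly when $s$ is winning. So the negated output is $\true$ exactly when $s$ is losing.

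The only point needing care, and the step I expect to be the main obstacle, is that Lemma \ref{lem:isWinStateAlg} gives termination together with the characterization of the answer $\true$. To conclude that a returned $\false$ really means ``losing'', I combine termination with completeness (winning implies $\true$) and with the dichotomy $\winSet \cup \loseSet = $ legal states. Equivalently, one can argue directly that Line 2 and Line 9 only fire when $s$ satisfies some $\psi_i$, which by construction contains only losing states. This direct argument is an induction on $i$: $\psi_0$ consists of terminal states, and a state in $\psi_i$ has all successors in $\varphi_{i-1}$, which are winning. Either route yields a decision procedure, which establishes the theorem.
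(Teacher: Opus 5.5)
Your proposal is correct and takes essentially the paper's approach: the paper also decides $\loseStateProb$ by running Algorithm~\ref{alg:isWinState} and noting that, because terminating ICGs have no draw states, it returns $\false$ exactly on losing states. Your legality check and your explicit justification (termination plus completeness plus the $\winSet$/$\loseSet$ dichotomy, or directly that every $\psi_i$ contains only losing states) fill in what the paper's sketch leaves as ``it can be verified''.
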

\begin{proofsketch}
	It can be verified that Algorithm ${\tt IsWinningState}(\Pi, s)$ returns $\false$ iff $s$ is a losing state.
	Hence, Algorithm \ref{alg:isWinState} is also complete and sound for losing states.
	It follows that deciding a state is losing for a terminating finite-depth ICG is decidable.
\end{proofsketch}

\section{The Draw Formula and State Problems} \label{sec:extensionToLoseDraw}
In this section, we analyze the decidability of the draw state and formula problems. 
%

\looseness=-1
Just like the winning and losing formula problems, the draw formula problem for general ICGs is undecidable, but becomes decidable for terminating ICGs.
\begin{theorem} \label{thm:isDrawForm}
	$\drawFormProb$ is undecidable.
\end{theorem}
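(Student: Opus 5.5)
I plan to reduce the mortality problem $\mortProb$ to $\drawFormProb$, reusing the ICG $\Pi'(M)$ from Definition \ref{def:secondEncoding} as in the proof of Theorem \ref{thm:isWinForm}. The candidate draw formula is simply $\phi := \false$. The claim to establish is that $\false$ is the draw formula of $\Pi'(M)$ iff $M$ is terminating on all configurations. Formally, Definition \ref{def:winForm} is phrased for terminating ICGs, so I read the draw formula in the obvious general sense: $s \models \constraint \land \phi$ iff $s$ is a legal draw state.

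($\Leftarrow$) Suppose $M$ halts on every configuration. Take a legal state $s$ of $\Pi'(M)$, with $v(s) = 2^{C(r_0)} 3^{C(r_1)} 5^{C(r_2)} 7^i b$ and $i \in \set{0,1}$. By Item 3 of Lemma \ref{lem:isWinForm}, there is a unique play from $s$. By Items 1 and 2, this play alternates between $7^0$- and $7^1$-states and emulates the run of $M$ on $C$. Since that run reaches a $\halt$ instruction, the play reaches a state with a halting label and order of $7$ equal to $1$, which is terminal by $\estate$. One subtlety: if $i = 0$ and the label is already $\halt$, the action $a_{\len[L]+1}$ is still enabled and first moves the state to the $7^1$ terminal state. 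So the play is finite, and by Item 5 of Lemma \ref{lem:isWinForm}, $s$ is not a draw state. Hence there are no legal draw states, and $\false$ is the draw formula.

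($\Rightarrow$) Contrapositively, suppose $M$ does not halt on some configuration $C$. Take $s$ with $v(s) = 2^{C(r_0)} 3^{C(r_1)} 5^{C(r_2)} 7^0 \cdot 1$. This $s$ is legal, since $1 \le C(r_0) \le \len[L]$ and the order of $7$ lies in $[0,1]$. By Items 1–3 of Lemma \ref{lem:isWinForm}, the unique play from $s$ follows the nonterminating run of $M$ on $C$. It never reaches a $\halt$ label, so it never reaches a terminal state and is infinite. By Item 5, $s$ is a draw state, yet $s \nmodels \constraint \land \false$. So $\false$ is not the draw formula.

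Since $\mortProb$ is undecidable and the map $M \mapsto (\Pi'(M), \false)$ is computable, $\drawFormProb$ is undecidable. The main point needing care is the ($\Leftarrow$) direction: I must check that every legal state, not only those in the image of a configuration, lies on a finite play. Legality forces the form $2^l 3^m 5^n 7^i b$ with $1 \le l \le \len[L]$, which is exactly a configuration times a harmless cofactor $b$. I must also confirm that every non-terminal legal state has an enabled action, so that the game cannot get stuck at a non-terminal state; this is what Items 1 and 2 of Lemma \ref{lem:isWinForm} guarantee.
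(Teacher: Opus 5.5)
Your proposal is correct and follows the paper's proof. Both use the same reduction from $\mortProb$ via $\Pi'(M)$ with candidate draw formula $\false$, and the same ($\Rightarrow$) argument: a non-halting configuration yields a $7^0$-state whose unique play is infinite, hence a draw state. The one minor difference is in ($\Leftarrow$). The paper shows the draw set is empty because the winning and losing formulas $\phi_w,\phi_l$ from Theorems~\ref{thm:isWinForm} and~\ref{thm:isLoseForm} together cover $\constraint$. You instead argue directly that every legal state has a unique finite play and apply Item~5 of Lemma~\ref{lem:isWinForm}, which is equally valid; your explicit treatment of $7^0$-states carrying a $\halt$ label also covers a case that Item~1 of that lemma leaves implicit.
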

\begin{proof}
	Let $M$ be a 2-counter machine and $\Pi'(M)$ the ICG defined in Definition \ref{def:secondEncoding}.	
	Let $\phi_d$ be the formula $\false$.
	The formula $\phi_d$ denotes the empty set of integers.
	We hereafter prove that $\phi_d$ is the draw formula of $\Pi'(M)$ iff $M$ is terminating on all configurations.
	
	($\Rightarrow$): Suppose that $\phi_d$ is the draw formula.
	It follows that no draw state exists.	
	We prove $M$ is terminating on all configurations by contradiction.
	Assume that $M$ is not terminating on a configuration $C$.
	Let $s$ be a state s.t. $v(s) = 2^{C(r_0)} 3^{C(r_1)} 5^{C(r_2)} 7^0 b$.
	By Item 3 of Lemma \ref{lem:isWinForm}, there is a unique play $\play$ from $s$.
	By Items 1 and 2 of Lemma \ref{lem:isWinForm}, $\play$ alternates between winning states and losing states, and emulates the run of $M$ on $C$.
	Hence, $\play$ is infinite.
	By Items 5 of Lemma \ref{lem:isWinForm}, $s$ is a draw state.
	This contradicts the assumption that no draw state exists.		
	Hence, $M$ is terminating on all configurations.	
	
	($\Leftarrow$): Suppose that $M$ is terminating on all configurations.
	We remind that the formula $\constraint$ for legal states is $v \geq 0 \land \chi_{2, 1, \len[L]}(v) \land \chi_{7, 0, 1}(v)$.
	Let $\phi_w$ be the formula $v \geq 0 \land \chi_{2, 1, \len[L]}(v) \land \varphi_{7, 0}(v)$ and $\phi_l$ be $v \geq 0 \land \chi_{2, 1, \len[L]}(v) \land \varphi_{7, 1}(v)$.
	By the proof of Theorems \ref{thm:isWinForm} and \ref{thm:isLoseForm}, we get that $\phi_w$ and $\phi_l$ is the winning and losing formula for $\Pi'(M)$, respectively.
	Clearly, $\phi_w \lor \phi_l \equiv \constraint$.
	In addition, the three sets $\winSet$, $\loseSet$ and $\drawSet$ are mutually disjoint.
	It follows that the set of draw states is empty and hence $\phi_d$ is the draw formula.
	
	Since the mortality problem of 2-counter machines is undecidable, deciding if an LIA-formula is the draw formula for a ICG is undecidable.
\end{proof}

\begin{theorem} \label{thm:isDrawFormTermICG}
	$\drawFormProb$ for terminating ICGs is decidable.
\end{theorem}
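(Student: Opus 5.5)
The plan is to use the fact, already noted before Definition~\ref{def:consWinForm}, that a terminating ICG has no draw states: every legal state is winning or losing \cite{Fer2018}. Hence the draw set $\drawSet$ of a terminating ICG $\Pi = \tuple{\var, \act, \constraint, \estate}$ is empty. By Definition~\ref{def:winForm}, a formula $\phi$ is then a draw formula of $\Pi$ iff no state satisfies $\constraint \land \phi$. Equivalently, $\phi$ is a draw formula iff the LIA formula $\neg(\constraint \land \phi)$ is valid.

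The key steps, in order, are as follows.
\begin{enumerate}
\item Justify that $\drawSet = \emptyset$ for terminating ICGs. If some legal state $s$ were neither winning nor losing, then by Definition~\ref{def:wlState} it is not terminal. Moreover, all of its successors are non-losing, and not all of them are winning, so at least one successor is again neither winning nor losing. Iterating this yields an infinite play from $s$, which contradicts termination. (This uses only the well-founded recursive characterization of $\winSet$ and $\loseSet$, and is the standard fact cited above.)
\item Conclude that $\phi$ is the draw formula iff $\constraint \models \neg\phi$, i.e.\ iff $\neg(\constraint \land \phi)$ is valid.
\item Invoke the decidability of validity in Presburger arithmetic \cite{Pre1929}, exactly as in the proof of Theorem~\ref{thm:isWinFormTermICG}. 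Since $\constraint$ and $\phi$ are LIA formulas, the test is effective.
\end{enumerate}

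The only point requiring care is step 1. It needs the non-well-founded case to be excluded: a state of infinite branching could in principle fail to be classified if the recursion in Definition~\ref{def:wlState} were read inductively. I would handle this with the argument sketched in step 1, which is essentially König-free. It needs only one bad successor at each step, not finite branching, so infinite branching causes no trouble. Everything else is immediate.
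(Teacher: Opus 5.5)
Your proposal is correct and takes essentially the same route as the paper: terminating ICGs have no draw states, so checking a candidate draw formula reduces to an LIA validity test, which Presburger's procedure decides. Your version improves on the paper in two small ways: you prove the no-draw fact rather than citing it from \cite{Fer2018}, and your criterion, validity of $\neg(\constraint \land \phi)$, is the exact one under Definition~\ref{def:winForm}, whereas the paper's test ``$\neg\phi$ is valid'' is only sufficient (e.g.\ $\phi = \neg\constraint$ is always a draw formula here), though decidability follows either way.
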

\begin{proof}
	Since terminating ICGs have no draw state, the draw formula is $\false$.
	It follows that an LIA-formula $\phi$ is the draw formula iff $\neg \phi$ is valid.
	Hence, deciding if an LIA-formula is the draw formula for a terminating ICG is decidable.
\end{proof}

Unlike the winning and losing state problems, the draw state problem for terminating ICGs is decidable, but is still undecidable for general ICGs.
We obtain the undecidability result via the 2nd reduction $\Pi'(M)$.
\begin{theorem} \label{thm:isDrawState}
	$\drawStateProb$ is undecidable.
\end{theorem}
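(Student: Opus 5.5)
We reduce from the halting problem $\haltProb$ of 2-counter machines, using the second encoding $\Pi'(M)$ of Definition \ref{def:secondEncoding}, as the text preceding the statement suggests. Given a 2-counter machine $M$ and a configuration $C$, we fix $b=1$ (any positive integer coprime to $2,3,5,7$ would do). We then take the state $s_C$ with $v(s_C) = 2^{C(r_0)} 3^{C(r_1)} 5^{C(r_2)} 7^0$. This state is legal: $C(r_0)$ lies in $[1,\len[L]]$, so $\chi_{2,1,\len[L]}$ holds, and the order of $7$ is $0$, so $\chi_{7,0,1}$ holds. The map $(M,C) \mapsto (\Pi'(M), s_C)$ is computable. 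We will show that $s_C$ is a draw state of $\Pi'(M)$ iff $M$ does not halt from $C$. Since $\haltProb$ is undecidable, so is its complement, and therefore $\drawStateProb$ is undecidable.

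For the direction showing that non-halting gives a draw, suppose $M$ does not terminate on $C$. By Item 3 of Lemma \ref{lem:isWinForm}, there is a unique play $\play$ from $s_C$. By Items 1 and 2, this play alternates the $7^0 \to 7^1$ step (action $a_{\len[L]+1}$) with the simulation step $7^1 \to 7^0$ (action $a_{C(r_0)}$). The configurations encoded along $\play$ are therefore exactly $C, M(C), M^2(C), \dots$. None of these is halting, so no state of $\play$ satisfies $\estate$, and $\play$ is infinite. Item 5 then gives that $s_C$ is a draw state.

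For the converse, suppose $M$ halts from $C$, say $M^m(C)$ is halting. The same simulation argument shows that the unique play from $s_C$ reaches a state whose order of $2$ is $l$ with $l:\halt\in L$. Here some care is needed. The state encoding $M^m(C)$ with $7^0$ still has the successor under $a_{\len[L]+1}$, which is defined for every $l$; this sends it to the $7^1$ state, which satisfies $\estate$. So the play is finite. Item 4 of Lemma \ref{lem:isWinForm} then says $s_C$ is winning, hence not a draw, since $\winSet$, $\loseSet$ and $\drawSet$ are disjoint.

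The main obstacle is making sure the simulation is faithful at the endpoints, so that Items 1 and 2 and the uniqueness claim of Item 3 really apply along the whole play. Concretely, the play must not get stuck at a non-halting configuration (every non-$\halt$ label $l$ has its action $a_l$ enabled at $7^1$), and it must stop exactly at halting configurations. If the statement of Item 1 is read as requiring $l:\halt\notin L$, I would check the halting case directly from $\trans(a_{\len[L]+1})$, whose guard only demands $\varphi_{7,0}(v)$ and a legal order of $2$. Everything else is a direct appeal to Lemma \ref{lem:isWinForm}.
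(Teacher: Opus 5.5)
Your proposal is correct and follows the paper's own proof: the same reduction from the complement of $\haltProb$, via $\Pi'(M)$ and the $7^0$-encoded state of $C$, with Lemma~\ref{lem:isWinForm} doing the work. The paper writes out only the draw $\Rightarrow$ non-halting direction (using Items 3, 5, 1, 2) and calls the converse similar. You spell out both directions, using Item 4 plus disjointness for the halting case, and you check by hand the final $a_{\len[L]+1}$ step into a terminal state, which Item 1 as stated does not cover.
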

\begin{proof}
	Let $C$ be a configuration of a 2-counter machine $M$ and $s$ a state of $\Pi'(M)$ s.t. $v(s) = 2^{C(r_0)} 3^{C(r_1)} 5^{C(r_2)} 7^0 b$.
	We hereafter prove that the state $s$ is draw iff $M$ is not terminating in $C$.
	We only verify the if direction.
	The only-if direction can be similarly proved.
	
	Suppose that $s$ is draw.
	By Item 3 of Lemma \ref{lem:isWinForm}, there is a unique play $\play$ from $s$.
	By Item 5 of Lemma \ref{lem:isWinForm}, $\play$ is infinite.
	By Items 1 and 2 of Lemma \ref{lem:isWinForm}, $\play$ emulates the behavior of $M$.
	Hence, $M$ is not terminating in $C$.
	
	
	The halting problem of 2-counter machines is undecidable.
	So is its complement.
	Hence, deciding if a state is draw for an LIA-definable ICG is undecidable.
\end{proof}

\begin{theorem} \label{thm:isDrawFormulaStateTermICG}
	$\drawStateProb$ for terminating ICGs is decidable.
\end{theorem}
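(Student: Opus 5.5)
The argument is direct and mirrors the proof of Theorem \ref{thm:isDrawFormTermICG}: in a terminating ICG the set $\drawSet$ is empty, so the answer to $\drawStateProb$ is always ``no''. A decision procedure therefore just outputs $\false$ on every input $(\Pi, s)$ with $\Pi$ terminating. The only substantive step is to justify that a terminating ICG has no draw states, i.e., that every legal state lies in $\winSet \cup \loseSet$.

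I would establish this by contradiction. Suppose a legal state $s_1$ is neither winning nor losing. Then $s_1$ is not terminal, since terminal states are losing by Condition 1 of Definition \ref{def:wlState}. If every successor of $s_1$ were winning, $s_1$ would be losing by Condition 3, and if some successor were losing, $s_1$ would be winning by Condition 2. Hence $s_1$ has a successor $s_2$ that is neither winning nor losing. Iterating, and using dependent choice, we obtain an infinite sequence $s_1 \order s_2 \order \cdots$ of draw states, which is an infinite play and contradicts termination. This fact was already invoked in the paper (``no draw state exists in terminating ICG'', citing \cite{Fer2018}), so it may simply be cited. If spelled out, one should read Definition \ref{def:wlState} as a least fixed point; the contradiction argument above is precisely the statement that the least fixed point covers every state from which all plays are finite.

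The remaining step is a small technicality: successors should be legal whenever the argument requires it. Otherwise the recursion could refer to non-legal states, whose status is irrelevant to $\drawSet = S \setminus (\winSet \cup \loseSet)$. I would handle this by restricting attention to legal states, as Algorithm \ref{alg:isWinState} does via the conjunct $\constraint$. Under that reading the set of draw states of a terminating ICG is empty, so the constant procedure answering ``not a draw state'' is correct, and $\drawStateProb$ for terminating ICGs is trivially decidable. I expect no real obstacle beyond stating the no-draw fact rigorously.
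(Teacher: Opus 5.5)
Your proposal is correct and follows the paper's proof, which likewise rests on the fact that a terminating ICG has no draw states, so the constant answer ``no'' decides $\drawStateProb$; your contradiction argument, which builds an infinite sequence of neither-winning-nor-losing states, just proves the no-draw fact that the paper takes for granted. The legality worry is harmless: the paper's definition of a play requires only the first state to be legal, so your infinite sequence already contradicts termination without restricting successors.
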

\begin{proofsketch}
	It follows directly from that any terminating ICG has no draw state. 
\end{proofsketch}

\section{Conclusions}

\looseness=-1
In this paper, we have studied several decision problems for LIA-definable ICGs, including $\termProb$, $\cyclicProb$, $\winFormProb$, $\loseFormProb$, $\drawFormProb$, $\winStateProb$, $\loseStateProb$ and $\drawStateProb$.
Table \ref{tab:summary-results} summarizes the decidability and undecidability results of the above problems for arbitrary ICGs, terminating ICGs, and terminating finite-depth ICGs.

%
%

\looseness=-1
To prove the undecidability result, we design three reductions $\Pi(M)$, $\Pi'(M)$ and $\Pi''(M)$ from 2-counter machines to LIA-definable ICGs.
Firstly, we utilize the prime encoding to represent each configuration $C$ of the 2-counter machine $M$ by a value of a state variable in each reduction.
Secondly, we design the transition formula of each action to emulate the behavior of $M$.
Finally, we reduce the halting, mortality or periodic problems of 2-counter machines to the above decision problems of ICGs.
The halting, mortality and periodic problems of 2-counter machines are all undecidable.
So are the above decision problems for arbitrary ICGs, which is shown in the first column of Table \ref{tab:summary-results}.

\looseness=-1
The terminating condition removes infinite plays together with draw states from ICGs.
This directly yields the decidability of $\termProb$, $\cyclicProb$, $\drawStateProb$ and $\drawFormProb$ for terminating ICGs.
It also makes the winning and losing formula problems decidable.
The absence of draw states, the correctness of a candidate winning (or losing) formula can be checked via the validity of several constraints that are formalized in LIA and that are based on the recursive definition of winning and losing states.
However, the terminating condition is not sufficient for the decidability of $\winStateProb$ and $\loseStateProb$, as demonstrated via the 3rd reduction that still emulates the behavior of a given 2-counter machine.

\looseness=-1
The finite-depth condition ensures that every winning or losing state has a well-defined integer-valued depth.
Under the finite-depth condition as well as the terminating condition, Algorithm \ref{alg:isWinState} constructs the two class of formulas that exactly captures the two sets of winning states and of losing states with monotonically increasing depth, respectively.
These two class of formulas eventually covers every winning or losing state for terminating finite-depth ICGs and therefore establishing the decidability of $\winStateProb$ and $\loseStateProb$. 

\looseness=-1
We have extended the above results to the mis\`{e}re play rule via some minor adjustments of their proof. 
Appendix \ref{sec:extensionToMisere} presents these adjustments.


\looseness=-1

\begin{table}
	\centering
	\small
	\caption{Summary of undecidability and decidability results for LIA-definable ICGs where \textbf{U} and \textbf{D} denote undecidable and decidable, respectively, and T$n$ denotes Theorem $n$.}
	\label{tab:summary-results}
	\begin{tabular}{|c|c|c|c|}
		\hline
		\begin{tabular}[c]{@{}c@{}}Decision \\ Problems\end{tabular}
		& \begin{tabular}[c]{@{}c@{}}Arbitrary\end{tabular}
		& \begin{tabular}[c]{@{}c@{}}Terminating\end{tabular}
		& \begin{tabular}[c]{@{}c@{}}Terminating \& \\ Finite-depth\end{tabular} \\
		\hline
		\hline
		\termProb
		& \textbf{U} [T\ref{thm:isTerminatingICG}]
		& \textbf{D} [T\ref{thm:isTerminatingCyclicICG}]
		& \textbf{D} [T\ref{thm:isTerminatingCyclicICG}] \\
		\hline
		\cyclicProb
		& \textbf{U} [T\ref{thm:isCyclicICG}]
		& \textbf{D} [T\ref{thm:isTerminatingCyclicICG}]
		& \textbf{D} [T\ref{thm:isTerminatingCyclicICG}]\\
		\hline
		\winFormProb
		& \textbf{U} [T\ref{thm:isWinForm}]
		& \textbf{D} [T\ref{thm:isWinFormTermICG}]
		& \textbf{D} [T\ref{thm:isWinFormTermICG}] \\
		\hline
		\loseFormProb
		& \textbf{U} [T\ref{thm:isLoseForm}]
		& \textbf{D} [T\ref{thm:isLoseFormTermICG}]
		& \textbf{D} [T\ref{thm:isLoseFormTermICG}]\\
		\hline
		\drawFormProb
		& \textbf{U} [T\ref{thm:isDrawForm}]
		& \textbf{D} [T\ref{thm:isDrawFormTermICG}]
		& \textbf{D} [T\ref{thm:isDrawFormTermICG}] \\
		\hline
		\winStateProb
		& \textbf{U} [T\ref{thm:isWinState}]
		& \textbf{U} [T\ref{thm:isWinState}]
		& \textbf{D} [T\ref{thm:isWinStateTermFD}] \\
		\hline
		\loseStateProb
		& \textbf{U} [T\ref{thm:isLoseState}]
		& \textbf{U} [T\ref{thm:isLoseState}]
		& \textbf{D} [T\ref{thm:isLoseStateTermICG}] \\
		\hline
		\drawStateProb
		& \textbf{U} [T\ref{thm:isDrawState}]
		& \textbf{D} [T\ref{thm:isDrawFormulaStateTermICG}]
		& \textbf{D} [T\ref{thm:isDrawFormulaStateTermICG}]
		\\
		\hline
	\end{tabular}
\end{table}

\bibliography{references}

\appendix

\section{Proofs of Lemmas}
\begin{lemma}
	$(\winSet \cup \loseSet, \order^+)$ is a well-founded set.
\end{lemma}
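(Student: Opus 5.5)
The plan is to read $\winSet$ and $\loseSet$ in Definition~\ref{def:wlState} as the \emph{least} sets closed under Conditions 1--3, and to extract well-foundedness from an ordinal rank attached to this inductive construction. Concretely, I will define by transfinite recursion $\loseSet_0$ as the set of legal terminal states and $\winSet_0=\emptyset$. At a successor stage, $\winSet_{\alpha+1}$ collects the states having some successor in $\loseSet_\alpha$, and $\loseSet_{\alpha+1}$ collects the states all of whose successors lie in $\winSet_\alpha$. At limit stages I take unions. By the Knaster--Tarski argument, $\winSet=\bigcup_\alpha \winSet_\alpha$ and $\loseSet=\bigcup_\alpha\loseSet_\alpha$. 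I then set $\mathrm{rk}(s)$ to be the least $\alpha$ with $s\in\winSet_\alpha\cup\loseSet_\alpha$; this transfinite rank plays the role of the depth $\dep[s]$ of Definition~\ref{def:depth}.

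The key step is to show that the rank strictly decreases along the edges that witness membership. If $s\in\loseSet$ is non-terminal, every $s'\in\suc(s)$ lies in $\winSet$ with $\mathrm{rk}(s')<\mathrm{rk}(s)$. If $s\in\winSet$, then some $s'\in\suc(s)\cap\loseSet$ has $\mathrm{rk}(s')<\mathrm{rk}(s)$. Both claims follow by inspecting the stage at which $s$ first enters. Composing such edges shows that whenever $s\order^+ t$ holds along a chain of witnessing edges, we get $\mathrm{rk}(t)<\mathrm{rk}(s)$. An infinite descending $\order^+$-sequence in $\winSet\cup\loseSet$ would then give an infinite strictly decreasing sequence of ordinals, which is impossible. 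This yields the lemma.

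The main obstacle is that the raw relation $\order^+$ on $\winSet\cup\loseSet$ need not be well-founded for an arbitrary ICG. For instance, two winning states may be successors of each other while each also has its own losing successor, or a winning state may reach itself through a draw state. So I will make explicit which case is being proved.

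In the intended setting, namely terminating ICGs where the lemma is used for Lemmas~\ref{lem:consWinForm} and~\ref{lem:isWinStateAlg}, the result is immediate. Suppose $s_1\order^+ s_2\order^+\cdots$ were an infinite chain. Concatenating the finite $\order$-paths between consecutive states gives an infinite play from $s_1$, which contradicts termination. I will present this short argument as the main proof for terminating ICGs. For general ICGs, I will state the result in the rank-based form above, with $\order^+$ restricted to witnessing edges, so that the claim is actually true.
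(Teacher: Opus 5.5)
Your diagnosis is correct, and it exposes a real error in the paper. The lemma is false for arbitrary ICGs, and the paper's proof breaks at exactly the step you flag. The paper asserts that ``every play on $\winSet\cup\loseSet$ is finite'' follows from Definition~\ref{def:wlState}, and then applies Levy's characterization (no infinite descending chain $\Rightarrow$ well-founded).

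Here is a concrete counterexample. Take one variable $v$ with $\constraint: 0\le v\le 3$, $\estate: v=1\lor v=3$, and $\trans(\act): (v=0\land(v'=1\lor v'=2))\lor(v=2\land(v'=0\lor v'=3))$. States $1$ and $3$ are losing and states $0$ and $2$ are winning. Yet $(0,2,0,2,\dots)$ is an infinite play inside $\winSet$, and $0\order^+0$.

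For terminating ICGs, your argument is the paper's argument, with the finiteness of plays now correctly supplied by termination rather than by the definition. This covers the uses in Lemma~\ref{lem:consWinForm} and Lemma~\ref{lem:isWinState}, since $\Pi''(M)$ is terminating. Your ordinal-rank construction is a genuinely different route. It needs no termination hypothesis, and it proves exactly what inductions following Definition~\ref{def:wlState} need: witnessing edges strictly decrease a rank.

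Two corrections to your framing.

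\begin{itemize}
\item Your list of uses is off. Lemma~\ref{lem:isWinStateAlg} is proved via depth, not via this lemma. Lemma~\ref{lem:isWinForm}, however, invokes it for $\Pi'(M)$, which need not be terminating (that is the point of the reduction). So the terminating case alone does not repair the paper. Your rank version does, because $\Pi'(M)$ is deterministic. A winning state's unique successor must be losing, and all successors of a losing state are winning. Hence every $\order$-edge leaving $\winSet\cup\loseSet$ is a witnessing edge, and the raw $\order^+$ is well-founded on $\winSet\cup\loseSet$ there.
\item Do not write that the rank argument ``yields the lemma'' for the raw relation, as your second paragraph currently does. It yields it only for chains of witnessing edges, as you yourself note afterwards.
\end{itemize}
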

\begin{proof}
	By the recursive definition of winning and losing states (cf. Definition \ref{def:wlState}), every play on $\winSet \cup \loseSet$ is finite.
	By Proposition 5.3 in \cite{Levy1979}, $\order^+$ is a well-founded relation on $\winSet \cup \loseSet$.
%
%
%
\end{proof}

\noindent
\textbf{Proof of Lemma \ref{lem:isWinForm}}
\begin{proof}
	\begin{enumerate}
		\item According to the transition formula $\trans(a_l)$, there is a unique action $a_{\len[L] + 1}$ applicable in $s$.
		Hence, $s$ has a unique successor state $s'$ s.t. $v(s') = 2^{l} 3^{m} 5^{n} 7^1 b$.
		
		\item The proof is similar to Item 1 by using action $a_l$ where $1 \leq l \leq \len[L]$ and $l: \halt \notin L$.
		
		\item It is directly from Items 1 and 2.

		\item 
		We prove by induction on $(\winSet \cup \loseSet, \order^+)$.
		
		\textbf{Base case} ($s$ is a minimal element of $\order^+$):
		It follows that $s$ is a legal terminal state.
		In this case, both the two facts that $s$ is a losing state and that the play from $s$ is finite hold.
		
		\textbf{Inductive step} ($s$ is a non-minimal element of $\order^+$):
		We first verify the case where $v(s) = 2^{l} 3^{m} 5^{n} 7^0 b$.
		The other case can be similarly proved.	
		
		($\Rightarrow$):
		Suppose that $s$ is a winning state.
		According to the transition formula $\trans(a_l)$, there is a unique successor state of $s'$ s.t. $v(s') = 2^{l} 3^{m} 5^{n} 7^1 b$.	
		By the definition of winning and losing states (cf. Definition \ref{def:wlState}), $s'$ is a losing state.
		By the inductive hypothesis, there is a finite play from $s'$.
		So is $s$.
		
		($\Leftarrow$): 
		Suppose that there is a finite play from $s$.
		According to the transition formula $\trans(a_l)$, there is a unique successor state of $s'$ s.t. $v(s') = 2^{l} 3^{m} 5^{n} 7^1 b$.	
		It follows that there is a finite play from $s'$.
		By the inductive hypothesis, $s'$ is a winning state.
		Hence, $s$ is a losing state.
		
		\item ($\Rightarrow$):
		Suppose that $s$ is a draw state.
		By Item 3, there is a unique play $\play$ from $s$.
		By Item 4, $\play$ is infinite.
		
		($\Leftarrow$):
		Suppose that there is an infinite play $\play$ from $s$.
		By Item 3, $\play$ is the unique play from $s$.
		By Item 4, $s$ is a draw state.		
	\end{enumerate}	
\end{proof}

\noindent
\textbf{Proof of Lemma \ref{lem:consWinForm}}
\begin{proof}
	($\Rightarrow$): Assume that $\phi$ is the winning formula.
	It follows that every state $s$ satisfying $\constraint \land \phi$ is winning and every state $s$ satisfying $\constraint \land \neg \phi$ is losing.	
	We prove the three constraints illustrated in Definition \ref{def:consWinForm} are valid.
	\begin{enumerate}
		\item $\constraint \land \estate \rightarrow \neg \phi$: 
		Let $s$ be a state satisfying $\constraint \land \estate$.
		It follows that $s$ is a legal terminal state.
		By the first item of the definition of winning and losing states (cf. Definition \ref{def:wlState}), $s$ is a losing state.
		Hence, $s \models \neg \phi$ and the first constraint is valid.
		
		\item $(\constraint \land \phi) \rightarrow \exists \statevar' [\trans(\act) \land (\constraint \land \neg \phi)[\statevar/\statevar']]$: 
		Let $s$ be a state satisfying $\constraint \land \phi$.
		It follows that $s$ is a legal winning state.
		By the second item of the definition of winning and losing states, there is at least one applicable action $a$ in $s$ s.t. $\tau(s, a)$ is a losing state .
		Hence, $s \models \exists \statevar' [\trans(\act) \land (\constraint \land \neg \phi)[\statevar/\statevar']]$ and the second constraint is valid.
		
		\item $(\constraint \land \neg \phi) \rightarrow \forall \statevar' [\trans(\act) \rightarrow (\constraint \land \phi)[\statevar/\statevar']]$:
		The proof of the validity of the third constraint is the same as that of the second constraint by using the third item of the definition of winning and losing states.
	\end{enumerate}
	
	($\Leftarrow$): Assume that $\phi$ is a formula s.t. all of the constraints illustrated in Definition \ref{def:consWinForm} is valid.
	We prove that every state $s$ satisfying $\constraint \land \phi$ is winning and every state $s$ satisfying $\constraint \land \neg \phi$ is losing by induction on $(\winSet \cup \loseSet, \order^+)$.
	
	\textbf{Base case} ($s$ is a minimal element of $\order^+$):
	It follows that $s$ is a legal terminal state and $s \models \constraint \land \estate$.	
	By the first constraint $\constraint \land \estate \rightarrow \neg \phi$, we get that $s \models \neg \phi$.
	By the definition of winning formula (cf. Definition \ref{def:winForm}), $s$ is a losing state.
	
	\textbf{Inductive step} ($s$ is a non-minimal element of $\order$):
	Suppose that $s \models \phi$.
	By the second constraint $(\constraint \land \phi) \rightarrow \exists \statevar' [\trans(\act) \land (\constraint \land \neg \phi)[\statevar/\statevar']]$, there is a successor state $\tau(s, a)$ of $s$ s.t. $\tau(s, a) \models \constraint \land \neg \phi$.
	By the inductive assumption, $\tau(s, a)$ is a losing state.
	By the second item of the definition of winning and losing states, $s$ is a winning state.
	Similarly, we can prove that $s$ is a losing state when $s \models \neg \phi$ by using the third constraint $(\constraint \land \neg \phi) \rightarrow \forall \statevar' [\trans(\act) \rightarrow (\constraint \land \phi)[\statevar/\statevar']]$ and the third item of the definition of winning and losing states.
\end{proof}

\noindent
\textbf{Proof of Lemma \ref{lem:isWinState}}
\begin{proof}
	\begin{enumerate}
		
		\item According to the two transition formulas $\trans(a_l)$ and $\trans(a'_l)$, there is a unique action $a'_l$ applicable in $s$.
		Hence, $s$ has a unique successor state $s'$ s.t. $v_1(s') = 2^{l} 3^{m} 5^{n} 7^2 b$ and $v_2(s') = v_2(s)$.
		
		\item The proof is similar to Item 1 by using action $a_l$ where $1 \leq l \leq \len[L]$ and $l: \halt \notin L$.
		
		\item Let $s$ be a legal state and $\play$ a play from $s$.
		We prove by the following cases:
		
		\begin{enumerate}
			\item $v_1(s) = 2^{l} 3^{m} 5^{n} 7^i b$ where $i \in \set{1, 2}$ and $l: \halt \in L$: By the termination condition $\estate$, no applicable action in $s$ exists.
			Hence, the play from $s$ contains only one terminal state and hence is finite.
			
			\item $v_1(s) = 2^{l} 3^{m} 5^{n} 7^2 b$ and $v_2(s) = 0$:
			The proof is similar to Case a.
			
			\item $v_1(s) = 2^{l} 3^{m} 5^{n} 7^0 b$: 
			Clearly, it is an initial state of $\Pi'(M)$.
			According to the two transition formulas $\trans(a_l)$ and $\trans(a'_l)$ there is a unique action $a_{0}$ applicable in $s$.			
			$s$ has a successor state $s'$ where $v_1(s') = 2^{l} 3^{m} 5^{n} 7^1 b$ and $v_2(s')$ is a natural number in $\play$.
			Let $C$ be the configuration s.t. $C(r_0) = l$, $C(r_1) = m$ and $C(r_2) = n$.
			Suppose that $C$ is not halting and $v_2(s') > 0$.
			By Items 1 and 2, we get the descendant state $t'$ of $s'$ s.t. $v_1(s') = 2^{M(C(r_0))} 3^{M(C(r_1))} 5^{M(C(r_2))} 7^1 b$ and $v_2(t') = v_2(s') - 1$ via performing two actions $a_l$ and $a_{\len[L] + 1}$.
			In the play, the value of $v_2$ decrements and the configuration represented by $v_1$ takes a single-step transition after two actions.			
			If every descendant state $s''$ of $s$ on $\play$ satisfies that $v_1(s'') = 2^{l''} 3^{m''} 5^{n''} 7^i b$ where $l'': \halt \notin L$, then it eventually ends at a terminal state defined as Case b.
			If some descendant state $s''$ of $s$ on $\play$ satisfies that $v_1(s'') = 2^{l''} 3^{m''} 5^{n''} 7^1 b$ where $l'': \halt \in L$, then it eventually ends at a terminal state defined as Case a.
			
			\item $v_1(s) = 2^{l} 3^{m} 5^{n} 7^i b$ and $v_2(s) > 0$ where $i \in \set{1, 2}$ and $l: \halt \notin L$: 
			The play $\play$ from $s$ is a sub-play of $\play'$ from an initial state.
			By Case c, $\play'$ satisfies the condition illustrated in Item 2.
			So does $\play$.
			
			\item $v_1(s) = 2^{l} 3^{m} 5^{n} 7^1 b$ and $v_2(s) = 0$ where $l: \halt \notin L$: 
			The proof is similar to Case d.
		\end{enumerate}
		
		\item We prove Items 4 and 5 together.		
		We here only prove that if $s$ is a state on $\play$ s.t. $v_1(s) = 2^{l} 3^{m} 5^{n} 7^1 b$, then $s$ is losing iff $\play$ ends at a state $s^*$ s.t. $v_1(s^*) = 2^{l^*} 3^{m^*} 5^{n^*} 7^1 b$.
		The other cases can be similarly proved.
		Let $S$ be the set of states on $\play$.
		Since $S$ is a subset of $\winSet \cup \loseSet$, $(S, \order^+)$ is well-founded set.
		We prove by induction $(S, \order^+)$.
		
		\textbf{Base case} ($s'$ is a minimal element of $\order^+$):
		It follows that $s'$ is a legal terminal state.
		By the definition of winning and losing states (cf. Definition \ref{def:wlState}), $s$ is a losing state. 
		
		\textbf{Inductive step} ($s'$ is a non-minimal element of $\order^+$):
		It follows that $l: \halt \notin L$.
		
		($\Rightarrow$):
		Suppose that $s$ is a losing state.		
		By Item 1, there is a unique successor state $s'$ of $s$ s.t. $v_1(s') = 2^{l} 3^{m} 5^{n} 7^2 b$ and $v_2(s') = v_2(s)$.
		By the definition of winning and losing states, $s'$ is a winning state.
		By the inductive hypothesis, the play $\play'$ from $s'$ ends at a state $s^*$ s.t. $v_1(s^*) = 2^{l^*} 3^{m^*} 5^{n^*} 7^1 b$.
		So is $\play$.
		
		($\Leftarrow$): 
		Suppose that $\play$ ends at a state $s^*$ s.t. $v_1(s^*) = 2^{l^*} 3^{m^*} 5^{n^*} 7^1 b$.	
		By Item 1, there is a unique successor state $s'$ of $s$ s.t. $v_1(s') = 2^{l} 3^{m} 5^{n} 7^2 b$ and $v_2(s') = v_2(s)$.
		By the inductive hypothesis, $s'$ is a winning state.
		By the definition of winning and losing states, $s$ is a losing state.
	\end{enumerate}	
\end{proof}

\begin{lemma} \label{lem:isWinStateAlg2}
	In Algorithm \ref{alg:isWinState}, the two formulas $\varphi_{i}$ and $\psi_i$ exactly captures the two sets of winning states with depth at most $i$ and of losing states with depth at most $i$, respectively.
\end{lemma}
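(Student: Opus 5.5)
The plan is to prove the statement by induction on $i$, reading it relative to the legal states: for every legal state $s$, we show $s \models \varphi_i$ iff $s \in \winSet$ and $\dep[s] \leq i$, and $s \models \psi_i$ iff $s \in \loseSet$ and $\dep[s] \leq i$. Two conventions from the paper are needed throughout.
\begin{itemize}
\item A legal state with no successor is terminal, i.e.\ satisfies $\estate$. This is the intended meaning of ``terminal'' in Definition~\ref{def:wlState}, and the paper's requirement on $\estate$ gives the converse direction.
\item Successors are always legal, which is why the constraint $\constraint$ is conjoined on the primed side in Lines~6 and~7.
\end{itemize}
Quantifier elimination plays no role here: it only replaces $\varphi_i,\psi_i$ by equivalent formulas, so the argument concerns the semantic sets.

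\textbf{Base case ($i=0$).} $\varphi_0 = \false$, and by Definition~\ref{def:depth} no winning state has depth $0$, since depth $0$ is reserved for terminal states and these are losing. $\psi_0 = \constraint \land \estate$ is exactly the set of legal terminal states. By the first clause of Definition~\ref{def:wlState} these are losing, and they are precisely the losing states of depth $0$, because a non-terminal losing state has depth $\max\{\cdots\}+1 \geq 1$.

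\textbf{Inductive step for $\varphi_i$.} Unfolding Line~6, $s \models \varphi_i$ iff $s$ has a successor $s'$ with $s' \models \constraint \land \psi_{i-1}$. By the induction hypothesis this means $s'$ is losing with $\dep[s'] \leq i-1$.
\begin{itemize}
\item ($\Rightarrow$) Such an $s$ has a losing successor, so it is winning by clause~2 of Definition~\ref{def:wlState}. Moreover $\dep[s] = \min\{\dep[t] \mid t \in \loseSet \cap \suc(s)\} + 1 \leq \dep[s'] + 1 \leq i$.
\item ($\Leftarrow$) Let $s$ be winning with $\dep[s] \leq i$. Choose a losing successor $t$ attaining the minimum in the depth definition; the minimum exists because depths are naturals. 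Then $\dep[t] \leq i-1$, so $t \models \psi_{i-1}$ by the induction hypothesis, and hence $s \models \varphi_i$.
\end{itemize}

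\textbf{Inductive step for $\psi_i$.} Unfolding Line~7, $s \models \psi_i$ iff every successor $s'$ of $s$ satisfies $\varphi_{i-1}$. By the induction hypothesis this means every successor is winning with depth $\leq i-1$. The argument splits according to whether $s$ has successors.
\begin{itemize}
\item If $s$ has no successors, then $s$ is terminal, hence losing of depth $0 \leq i$. Conversely, terminal states satisfy $\psi_i$ vacuously.
\item Otherwise all successors are winning, so $s$ is losing by clause~3. Its depth is $\max\{\dep[s'] \mid s' \in \suc(s)\}+1$; this set is bounded by $i-1$, so the maximum exists and $\dep[s] \leq i$.
\item Conversely, let $s$ be losing, non-terminal, with $\dep[s] \leq i$. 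Then $\dep[s]$ is defined, so the max over successors exists and is $\leq i-1$. Every successor is winning by clause~3, so every successor lies in $\varphi_{i-1}$ by the induction hypothesis, and therefore $s \models \psi_i$.
\end{itemize}

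The main obstacle is the off-by-one and indexing bookkeeping, not any deep fact. Line~6 uses $\psi_{i-1}$ and Line~7 uses $\varphi_{i-1}$, which is what makes the same-index induction go through. The other delicate points are the treatment of terminal and successor-free states inside $\psi_i$, and the need for the relevant depths to be defined, i.e.\ finite-depth, in the converse directions.

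As a by-product, the argument shows monotonicity, $\varphi_{i-1} \models \varphi_i$ and $\psi_{i-1} \models \psi_i$ on legal states, since the characterization is ``depth $\leq i$''. The termination and completeness claims of Lemma~\ref{lem:isWinStateAlg} then follow once $\dep[s]$ is finite.
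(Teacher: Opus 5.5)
Your proposal is correct and follows the paper's own proof. Both argue by induction on $i$, handling $\varphi_i$ through a losing successor of depth at most $i-1$ and $\psi_i$ through all successors being winning of depth at most $i-1$. You are more careful than the paper in using the index $i-1$ in the induction hypothesis, in treating successor-free (terminal) states in $\psi_i$ explicitly, and in stating the assumptions the argument needs: that successors are legal and that depths are defined (the finite-depth hypothesis).
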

\begin{proof}
	We prove by induction on $i$.
	
	\textbf{Base case} ($i = 0$):
	By the definition of winning and losing states, all terminal states are losing states and hence no winning state is a terminal state.
	By the definition of the depth on states, all terminal states are of depth $0$.
	Hence, $\varphi_0 = \false$ is the set of depth $0$ winning states, and $\psi_0 = \constraint \land \estate$ is the set of depth $0$ losing states.
	
	\textbf{Inductive step} ($i > 1$):
	By the inductive assumption, $\varphi_{i - 1}$ and $\psi_{i - 1}$ exactly captures the two sets of winning states with depth at most $i$ and of losing states with depth at most $i$, respectively.
		
	At Line 6, $\varphi_i = \exists \statevar' \{\trans(\act) \land (\constraint \land \psi_{i - 1})[\statevar/\statevar']\}$ denotes the set of states $s$ that has at least one successor state $s'$ s.t. $s'$ is losing state with at most depth $i - 1$.
	By the definition of depth on states, $\dep[s] \leq i$.
	By the definition of winning and losing states, $s$ is winning.
	Hence, $\varphi_i$ denotes the set of winning states with depth at most $i$.
	
	Let $s$ be a winning state with depth at most $i$.
	By the definition of winning and losing states, there is a successor state $s'$ of $s$ s.t. $s'$ is a losing state.
	By the definition of depth on states, $\dep[s'] \leq i - 1$.
	Hence, $s' \models \psi_{i - 1}$ and $s \models \exists \statevar' \{\trans(\act) \land (\constraint \land \psi_{i - 1})[\statevar/\statevar']\}$.
	
	Similarly, $\psi_i = \forall \statevar' \{\trans(\act) \rightarrow (\constraint \land \varphi_{i - 1})[\statevar/\statevar']\}$ is the set of losing states with depth at most $i$.
\end{proof}

\noindent
\textbf{Proof of Lemma \ref{lem:isWinStateAlg}}
\begin{proof}
Let $s$ be a state of a terminating and finite-depth ICG.
	
We first prove the termination of Algorithm \ref{alg:isWinState}.
In Alg. \ref{alg:isWinState}, it iteratively constructs the two class formulas $\varphi_{i}$'s and $\psi_i$'s.
By Lemma \ref{lem:isWinStateAlg2}, the two formulas $\varphi_{i}$ and $\psi_i$ exactly captures the two sets of winning states with depth at most $i$ and of losing states with depth at most $i$, respectively.
The $s$ must satisfy some $\psi_i$ or $\varphi_i$ (Lines 2, 8 \& 9).
Hence, the loop (Lines 4 - 10) is guaranteed to terminate.
Algorithm \ref{alg:isWinState} always terminates.

We now prove the soundness of Algorithm \ref{alg:isWinState} by contraposition.
Suppose that $s$ is not a winning state.
It follows that $s$ is a losing state.
It will satisfy some formula $\psi_i$ iteratively constructed in lines 1 \& 7.
Algorithm \ref{alg:isWinState} terminates with $\false$ (Lines 2 \& 9).

We finally prove the completeness of Algorithm \ref{alg:isWinState}.
Suppose that $s$ is a winning state.
It will satisfy some formula $\varphi_i$ iteratively constructed in line 1 \& 6	
Algorithm \ref{alg:isWinState} terminates with $\true$ (Line 8).	
\end{proof}

\section{Extensions to the Mis\`{e}re Play Rule} \label{sec:extensionToMisere}
\looseness=-1
In this section, we extend our decidability and undecidability results on ICGs under the normal play rule to the mis\`{e}re play rule.
The only difference between the normal and mis\`{e}re play rule lies in Definition \ref{def:wlState}.
Under the mis\`{e}re play rule, every terminal state is a winning state.
The undecidability and decidability results for LIA-definable ICGs summarized in Table \ref{tab:summary-results} also hold under the mis\`{e}re play rule.
In the following, we illustrate how to adjust the original proof under the normal play rule so as to suit the mis\`{e}re play rule rather than present the detailed proof for the mis\`{e}re play rule with the same idea as the original proof.

\subsection{The Terminating and Cyclic Problems}
The terminating and cyclic properties of ICGs do not depend on the play rule.
Thus, Theorems \ref{thm:isTerminatingICG}, \ref{thm:isCyclicICG}, and \ref{thm:isTerminatingCyclicICG} hold under the mis\`{e}re play rule.

\subsection{The Winning, Losing and Draw Formula Problems}
\looseness=-1
We first consider arbitrary ICGs.
Lemma \ref{lem:isWinForm} still holds except that the 4th item is 
\begin{enumerate}
	\item[4.] $s$ is a state of $\Pi'(M)$ s.t. $v(s) = 2^{l} 3^{m} 5^{n} 7^0 b$ (resp. $2^l 3^{m} 5^{n} 7^{1} b$), then $s$ is a losing (resp. winning) state iff there is a finite play from $s$.
\end{enumerate}

Hence, the formula $\phi_w: v \geq 0 \land \chi_{2,1,\len[L]}(v) \land \varphi_{7, 1}(v)$ exactly characterizes the set of positive integers $2^l 3^m 5^n 7^1 b$ where $1 \leq l \leq \len[L]$ and $b$ is not a multiple of $2$, $3$, $5$ or $7$.
It can be verified that $\phi_w$ is the winning formula for $\Pi'(M)$ iff $M$ is terminating on all configurations.
Similarly, $\phi_l: v \geq 0 \land \chi_{2,1,\len[L]}(v) \land \varphi_{7, 0}(v)$ is the losing formula iff $M$ is terminating on all configurations, and $\phi_d: \false$ is the draw formula iff $M$ is terminating on all configurations.
It directly follows that $\winFormProb$, $\loseFormProb$ and $\drawFormProb$ are undecidable for general ICGs.

\looseness=-1
In the following, we prove that $\winFormProb$, $\loseFormProb$ and $\drawFormProb$ become decidable for terminating ICGs.
To this end, we present the modified constraints for the winning formula under the mis\`{e}re play rule. 

\begin{definition}\label{def:consWinForm'} \rm
	Let $\Pi = \tuple{\var, \act, \constraint, \estate}$ be a terminating ICG.
	The constraints for the winning formula $\phi$ of $\Pi$ are as follows:
	\begin{enumerate}
		\item $\constraint \land \estate \rightarrow \phi$;
		\item $(\constraint \land \neg \estate \land \phi) \rightarrow \exists \statevar' [\trans(\act) \land (\constraint \land \neg \phi)[\statevar/\statevar']]$;
		\item $(\constraint \land \neg \phi) \rightarrow \forall \statevar' [\trans(\act) \rightarrow (\constraint \land \phi)[\statevar/\statevar']]$.
	\end{enumerate}
\end{definition}

\looseness=-1
The difference between Definitions \ref{def:consWinForm} and \ref{def:consWinForm'} are the following:
(1) the first constraint stipulates that every terminal state is winning rather than losing due to the difference between the normal and mis\`{e}re play rules; and
(2) the second constraint requires that every non-terminal winning state has some losing successors.
It can be verified that under the mis\`{e}re play rule, a formula $\phi$ is the winning formula of $\Pi$ iff all of the constraints illustrated in Definition \ref{def:consWinForm'} are valid.
It follows that $\winFormProb$ for terminating ICGs is decidable.
The set $\loseSet$ of losing states are the the complement of the $\winSet$ set of winning states relative to the set of legal states.
So $\loseFormProb$ for terminating ICGs is decidable.


\begin{algorithm}[!t]
	\small
	\caption{${\tt IsWinningState''}(\Pi, s)$}
	\label{alg:isWinState_Misere}
	\KwIn{$\Pi = \tuple{\var, \act, \constraint, \estate}$: a terminating finite-depth ICG; \\
		\hspace*{9.5mm}$s$: a legal state.}
	\KwOut{$\true$, if $s$ is a winning state under the mis\`{e}re play rule; $\false$, otherwise.}
	
	$\psi_{0} \assign \false$ and $\varphi_{0} \assign (\constraint \land \estate)$
	
	\lIf{$s \models \varphi_{0}$}
	{
		\Return{$\true$}
	}
	\Else{
		$i \assign 1$
		
		\While{{\bf true}}
		{
			$\psi_{i} \assign$ a quantifier-free formula equivalent to $\hspace*{6.5mm} \neg \estate \land \forall \statevar' \{\trans(\act) \rightarrow (\constraint \land \varphi_{i - 1})[\statevar/\statevar']\}$
			
			$\varphi_{i} \assign$ a quantifier-free formula equivalent to $\hspace*{6.5mm} \varphi_0 \lor \exists \statevar' \{\trans(\act) \land (\constraint \land \psi_{i - 1})[\statevar/\statevar']\}$
			
			\lIf{$s \models \constraint \land \psi_{i}$}
			{
				\Return{$\false$}
			}	
			
			\lIf{$s \models \constraint \land \varphi_{i}$}
			{
				\Return{$\true$}
			}			
			
			$i \assign i + 1$
		}
	}
\end{algorithm}

\subsection{The Winning, Losing and Draw State Problems}

\looseness=-1
We first consider the winning and losing state problems for terminating ICGs.
We define the forth reduction $\Pi'''(M)$ from 2-counter machines to ICGs that is the same as $\Pi''(M)$, except that the transition formula for $a_0$ is $\varphi_{7,0}(v_1) \land v'_1 = 7^2v_1 \land v'_2 \geq 0$.
Items 1, 2 and 3 of Lemma \ref{lem:isWinState} still hold for $\Pi'''(M)$ under the mis\`{e}re play rule.
Since terminal states are winning under the mis\`{e}re play rule, we present the modified Items 4 and 5.
\begin{enumerate}
    \item[4.] If $s$ is a state on a play from an initial state such that
    $v_1(s)=2^l3^m5^n7^1b$, then $s$ is winning (resp. losing) iff the play ends at a state $s^*$ such that
    $v_1(s^*)=2^{l^*}3^{m^*}5^{n^*}7^1b$
    (resp. $v_1(s^*)=2^{l^*}3^{m^*}5^{n^*}7^2b$).

    \item[5.] If $s$ is a state on a play from an initial state such that
    $v_1(s)=2^l3^m5^n7^2b$, then $s$ is losing (resp. winning) iff the play ends at a state $s^*$ such that
    $v_1(s^*)=2^{l^*}3^{m^*}5^{n^*}7^1b$
    (resp. $v_1(s^*)=2^{l^*}3^{m^*}5^{n^*}7^2b$).
\end{enumerate}

\looseness=-1
Let $C$ be a configuration of a 2-counter machine $M$ and $s$ a state of $\Pi'''(M)$ s.t. $v_1(s) = 2^{C(r_0)} 3^{C(r_1)} 5^{C(r_2)} 7^0 b$.
We can prove that the state $s$ is winning iff $M$ is terminating in $C$ under the mis\`{e}re play rule based on the above modified Lemma \ref{lem:isWinState} and the proof of Theorem \ref{thm:isWinState}.
%
Since the halting problem of 2-counter machines is undecidable, $\winStateProb$ remains undecidable even for terminating ICGs under the mis\`{e}re play rule.
In addition, the losing state problem is the complement of the winning state problem for terminating ICGs.
It directly follows that $\loseStateProb$ is also undecidable for terminating ICGs. 

\looseness=-1
We now consider the winning and losing state problems for terminating finite-depth ICGs.
To this end, we present the modified algorithm for the winning state problem and the modified definition of depth of winning or losing states under the mis\`{e}re play rule.
Similarly to Algorithm \ref{alg:isWinState}, Algorithm \ref{alg:isWinState_Misere} iteratively constructs two class formulas: $\varphi_i$'s and $\psi_i$'s, exactly capturing the two sets of winning states and of losing states with depth at most $i$ under the mis\`{e}re play rule.
The difference between Algorithms \ref{alg:isWinState} and \ref{alg:isWinState_Misere} are the following:
(1) Initially, to conform with the mis\`{e}re play rule, the formula $\varphi_0$ is $\constraint \land \estate$, meaning that terminal states are winning, and $\psi_0$ is $\false$, meaning that no terminal state is losing (Line 1); and 
(2) Each formula $\psi_i$ implies that $\neg \estate$, since $\forall \statevar' \{\trans(\act) \rightarrow (\constraint \land \varphi_{i - 1})[\statevar/\statevar']\}$ include terminal states which is not losing under the mis\`{e}re play rule; and 
(3) Each formula $\varphi_i$ contains a disjunct $\varphi_0$ since the set of winning states with depth at most $i$ contains all legal terminal states but $\exists \statevar' \{\trans(\act \land (\constraint \land \psi_{i - 1})[\statevar/\statevar']\}$ does not imply $\varphi_0$.
\begin{definition} \rm \label{def:depthMisere}
	Let $s$ be a winning or losing state of an ICG.
	The depth $d'(s)$ of $s$ under mis\`{e}re play rule is defined as:
		$\begin{cases}
			0 & \text{if } s \text{ is a terminal state}; \\
			
			\max \set{d'(s') \mid s' \in \suc(s)} + 1 & \text{if } s \text{ is a losing state}; \\
			
			\min \set{d'(s') \mid s' \in \loseSet \cap \suc(s)}  + 1 & \text{otherwise}. 
			%
		\end{cases}$
\end{definition}

\looseness=-1
Under the mis\`{e}re play rule, for terminating finite-depth ICGs, it can be verified that Algorithm \ref{alg:isWinState_Misere} enjoys the termination, soundness and completeness properties and hence $\winStateProb$ and $\loseStateProb$ are decidable.

\looseness=-1
We finally consider the draw state problem.
Item 1, 2, 3 and 5 of Lemma \ref{lem:isWinForm} still holds under the mis\`{e}re play rule.
It implies that Theorem \ref{thm:isDrawState} remains valid (that is, $\drawStateProb$ is undecidable) under the mis\`{e}re play rule.
In addition, as no draw state exists in any terminating ICG, the draw state problem for terminating ICGs is decidable.

\end{document}